

\documentclass{article}

    \usepackage{graphicx}
    
    \usepackage{caption}
    \DeclareCaptionFormat{nocaption}{}
    \captionsetup{format=nocaption,aboveskip=0pt,belowskip=0pt}

    \usepackage{float}
    \floatplacement{figure}{H} 
    \usepackage{xcolor} 
    \usepackage{enumerate} 
    \usepackage{amsmath} 
    \usepackage{amssymb} 
    \usepackage{textcomp} 
    \AtBeginDocument{%
    }
    \usepackage{upquote} 
    \usepackage{eurosym} 

    \usepackage{iftex}
    \ifPDFTeX
        \usepackage[T1]{fontenc}
        \IfFileExists{alphabeta.sty}{
              \usepackage{alphabeta}
          }{
              \usepackage[mathletters]{ucs}
              \usepackage[utf8x]{inputenc}
          }
    \else
        \usepackage{fontspec}
        \usepackage{unicode-math}
    \fi

    \usepackage{fancyvrb} 
    \usepackage{grffile} 
    \makeatletter 
    \@ifpackagelater{grffile}{2019/11/01}
    {
    }
    {
      \def\Gread@@xetex#1{%
        \IfFileExists{"\Gin@base".bb}%
        {\Gread@eps{\Gin@base.bb}}%
        {\Gread@@xetex@aux#1}%
      }
    }
    \makeatother
    \usepackage[Export]{adjustbox} 
    \adjustboxset{max size={0.9\linewidth}{0.9\paperheight}}

    \usepackage{hyperref}
    \usepackage{titling}
    \usepackage{longtable} 
    \usepackage{booktabs}  
    \usepackage{array}     
    \usepackage{calc}      
    \usepackage[inline]{enumitem} 
    \usepackage[normalem]{ulem} 
    \usepackage{mathrsfs}
    
      \usepackage{arxiv}

    \definecolor{urlcolor}{rgb}{0,.145,.698}
    \definecolor{linkcolor}{rgb}{.71,0.21,0.01}
    \definecolor{citecolor}{rgb}{.12,.54,.11}

    \definecolor{ansi-black}{HTML}{3E424D}
    \definecolor{ansi-black-intense}{HTML}{282C36}
    \definecolor{ansi-red}{HTML}{E75C58}
    \definecolor{ansi-red-intense}{HTML}{B22B31}
    \definecolor{ansi-green}{HTML}{00A250}
    \definecolor{ansi-green-intense}{HTML}{007427}
    \definecolor{ansi-yellow}{HTML}{DDB62B}
    \definecolor{ansi-yellow-intense}{HTML}{B27D12}
    \definecolor{ansi-blue}{HTML}{208FFB}
    \definecolor{ansi-blue-intense}{HTML}{0065CA}
    \definecolor{ansi-magenta}{HTML}{D160C4}
    \definecolor{ansi-magenta-intense}{HTML}{A03196}
    \definecolor{ansi-cyan}{HTML}{60C6C8}
    \definecolor{ansi-cyan-intense}{HTML}{258F8F}
    \definecolor{ansi-white}{HTML}{C5C1B4}
    \definecolor{ansi-white-intense}{HTML}{A1A6B2}
    \definecolor{ansi-default-inverse-fg}{HTML}{FFFFFF}
    \definecolor{ansi-default-inverse-bg}{HTML}{000000}

    \definecolor{outerrorbackground}{HTML}{FFDFDF}

    \providecommand{\tightlist}{%
      \setlength{\itemsep}{0pt}\setlength{\parskip}{0pt}}
    \DefineVerbatimEnvironment{Highlighting}{Verbatim}{commandchars=\\\{\}}


    

    \let\Oldtex\TeX
    \let\Oldlatex\LaTeX
    \renewcommand{\TeX}{\textrm{\Oldtex}}
    \renewcommand{\LaTeX}{\textrm{\Oldlatex}}

    
\makeatletter
\def\PY@reset{\let\PY@it=\relax \let\PY@bf=\relax%
    \let\PY@ul=\relax \let\PY@tc=\relax%
    \let\PY@bc=\relax \let\PY@ff=\relax}
\def\PY@tok#1{\csname PY@tok@#1\endcsname}
\def\PY@toks#1+{\ifx\relax#1\empty\else%
    \PY@tok{#1}\expandafter\PY@toks\fi}
\def\PY@do#1{\PY@bc{\PY@tc{\PY@ul{%
    \PY@it{\PY@bf{\PY@ff{#1}}}}}}}
\def\PY#1#2{\PY@reset\PY@toks#1+\relax+\PY@do{#2}}

\@namedef{PY@tok@w}{\def\PY@tc##1{\textcolor[rgb]{0.73,0.73,0.73}{##1}}}
\@namedef{PY@tok@c}{\let\PY@it=\textit\def\PY@tc##1{\textcolor[rgb]{0.24,0.48,0.48}{##1}}}
\@namedef{PY@tok@cp}{\def\PY@tc##1{\textcolor[rgb]{0.61,0.40,0.00}{##1}}}
\@namedef{PY@tok@k}{\let\PY@bf=\textbf\def\PY@tc##1{\textcolor[rgb]{0.00,0.50,0.00}{##1}}}
\@namedef{PY@tok@kp}{\def\PY@tc##1{\textcolor[rgb]{0.00,0.50,0.00}{##1}}}
\@namedef{PY@tok@kt}{\def\PY@tc##1{\textcolor[rgb]{0.69,0.00,0.25}{##1}}}
\@namedef{PY@tok@o}{\def\PY@tc##1{\textcolor[rgb]{0.40,0.40,0.40}{##1}}}
\@namedef{PY@tok@ow}{\let\PY@bf=\textbf\def\PY@tc##1{\textcolor[rgb]{0.67,0.13,1.00}{##1}}}
\@namedef{PY@tok@nb}{\def\PY@tc##1{\textcolor[rgb]{0.00,0.50,0.00}{##1}}}
\@namedef{PY@tok@nf}{\def\PY@tc##1{\textcolor[rgb]{0.00,0.00,1.00}{##1}}}
\@namedef{PY@tok@nc}{\let\PY@bf=\textbf\def\PY@tc##1{\textcolor[rgb]{0.00,0.00,1.00}{##1}}}
\@namedef{PY@tok@nn}{\let\PY@bf=\textbf\def\PY@tc##1{\textcolor[rgb]{0.00,0.00,1.00}{##1}}}
\@namedef{PY@tok@ne}{\let\PY@bf=\textbf\def\PY@tc##1{\textcolor[rgb]{0.80,0.25,0.22}{##1}}}
\@namedef{PY@tok@nv}{\def\PY@tc##1{\textcolor[rgb]{0.10,0.09,0.49}{##1}}}
\@namedef{PY@tok@no}{\def\PY@tc##1{\textcolor[rgb]{0.53,0.00,0.00}{##1}}}
\@namedef{PY@tok@nl}{\def\PY@tc##1{\textcolor[rgb]{0.46,0.46,0.00}{##1}}}
\@namedef{PY@tok@ni}{\let\PY@bf=\textbf\def\PY@tc##1{\textcolor[rgb]{0.44,0.44,0.44}{##1}}}
\@namedef{PY@tok@na}{\def\PY@tc##1{\textcolor[rgb]{0.41,0.47,0.13}{##1}}}
\@namedef{PY@tok@nt}{\let\PY@bf=\textbf\def\PY@tc##1{\textcolor[rgb]{0.00,0.50,0.00}{##1}}}
\@namedef{PY@tok@nd}{\def\PY@tc##1{\textcolor[rgb]{0.67,0.13,1.00}{##1}}}
\@namedef{PY@tok@s}{\def\PY@tc##1{\textcolor[rgb]{0.73,0.13,0.13}{##1}}}
\@namedef{PY@tok@sd}{\let\PY@it=\textit\def\PY@tc##1{\textcolor[rgb]{0.73,0.13,0.13}{##1}}}
\@namedef{PY@tok@si}{\let\PY@bf=\textbf\def\PY@tc##1{\textcolor[rgb]{0.64,0.35,0.47}{##1}}}
\@namedef{PY@tok@se}{\let\PY@bf=\textbf\def\PY@tc##1{\textcolor[rgb]{0.67,0.36,0.12}{##1}}}
\@namedef{PY@tok@sr}{\def\PY@tc##1{\textcolor[rgb]{0.64,0.35,0.47}{##1}}}
\@namedef{PY@tok@ss}{\def\PY@tc##1{\textcolor[rgb]{0.10,0.09,0.49}{##1}}}
\@namedef{PY@tok@sx}{\def\PY@tc##1{\textcolor[rgb]{0.00,0.50,0.00}{##1}}}
\@namedef{PY@tok@m}{\def\PY@tc##1{\textcolor[rgb]{0.40,0.40,0.40}{##1}}}
\@namedef{PY@tok@gh}{\let\PY@bf=\textbf\def\PY@tc##1{\textcolor[rgb]{0.00,0.00,0.50}{##1}}}
\@namedef{PY@tok@gu}{\let\PY@bf=\textbf\def\PY@tc##1{\textcolor[rgb]{0.50,0.00,0.50}{##1}}}
\@namedef{PY@tok@gd}{\def\PY@tc##1{\textcolor[rgb]{0.63,0.00,0.00}{##1}}}
\@namedef{PY@tok@gi}{\def\PY@tc##1{\textcolor[rgb]{0.00,0.52,0.00}{##1}}}
\@namedef{PY@tok@gr}{\def\PY@tc##1{\textcolor[rgb]{0.89,0.00,0.00}{##1}}}
\@namedef{PY@tok@ge}{\let\PY@it=\textit}
\@namedef{PY@tok@gs}{\let\PY@bf=\textbf}
\@namedef{PY@tok@gp}{\let\PY@bf=\textbf\def\PY@tc##1{\textcolor[rgb]{0.00,0.00,0.50}{##1}}}
\@namedef{PY@tok@go}{\def\PY@tc##1{\textcolor[rgb]{0.44,0.44,0.44}{##1}}}
\@namedef{PY@tok@gt}{\def\PY@tc##1{\textcolor[rgb]{0.00,0.27,0.87}{##1}}}
\@namedef{PY@tok@err}{\def\PY@bc##1{{\setlength{\fboxsep}{\string -\fboxrule}\fcolorbox[rgb]{1.00,0.00,0.00}{1,1,1}{\strut ##1}}}}
\@namedef{PY@tok@kc}{\let\PY@bf=\textbf\def\PY@tc##1{\textcolor[rgb]{0.00,0.50,0.00}{##1}}}
\@namedef{PY@tok@kd}{\let\PY@bf=\textbf\def\PY@tc##1{\textcolor[rgb]{0.00,0.50,0.00}{##1}}}
\@namedef{PY@tok@kn}{\let\PY@bf=\textbf\def\PY@tc##1{\textcolor[rgb]{0.00,0.50,0.00}{##1}}}
\@namedef{PY@tok@kr}{\let\PY@bf=\textbf\def\PY@tc##1{\textcolor[rgb]{0.00,0.50,0.00}{##1}}}
\@namedef{PY@tok@bp}{\def\PY@tc##1{\textcolor[rgb]{0.00,0.50,0.00}{##1}}}
\@namedef{PY@tok@fm}{\def\PY@tc##1{\textcolor[rgb]{0.00,0.00,1.00}{##1}}}
\@namedef{PY@tok@vc}{\def\PY@tc##1{\textcolor[rgb]{0.10,0.09,0.49}{##1}}}
\@namedef{PY@tok@vg}{\def\PY@tc##1{\textcolor[rgb]{0.10,0.09,0.49}{##1}}}
\@namedef{PY@tok@vi}{\def\PY@tc##1{\textcolor[rgb]{0.10,0.09,0.49}{##1}}}
\@namedef{PY@tok@vm}{\def\PY@tc##1{\textcolor[rgb]{0.10,0.09,0.49}{##1}}}
\@namedef{PY@tok@sa}{\def\PY@tc##1{\textcolor[rgb]{0.73,0.13,0.13}{##1}}}
\@namedef{PY@tok@sb}{\def\PY@tc##1{\textcolor[rgb]{0.73,0.13,0.13}{##1}}}
\@namedef{PY@tok@sc}{\def\PY@tc##1{\textcolor[rgb]{0.73,0.13,0.13}{##1}}}
\@namedef{PY@tok@dl}{\def\PY@tc##1{\textcolor[rgb]{0.73,0.13,0.13}{##1}}}
\@namedef{PY@tok@s2}{\def\PY@tc##1{\textcolor[rgb]{0.73,0.13,0.13}{##1}}}
\@namedef{PY@tok@sh}{\def\PY@tc##1{\textcolor[rgb]{0.73,0.13,0.13}{##1}}}
\@namedef{PY@tok@s1}{\def\PY@tc##1{\textcolor[rgb]{0.73,0.13,0.13}{##1}}}
\@namedef{PY@tok@mb}{\def\PY@tc##1{\textcolor[rgb]{0.40,0.40,0.40}{##1}}}
\@namedef{PY@tok@mf}{\def\PY@tc##1{\textcolor[rgb]{0.40,0.40,0.40}{##1}}}
\@namedef{PY@tok@mh}{\def\PY@tc##1{\textcolor[rgb]{0.40,0.40,0.40}{##1}}}
\@namedef{PY@tok@mi}{\def\PY@tc##1{\textcolor[rgb]{0.40,0.40,0.40}{##1}}}
\@namedef{PY@tok@il}{\def\PY@tc##1{\textcolor[rgb]{0.40,0.40,0.40}{##1}}}
\@namedef{PY@tok@mo}{\def\PY@tc##1{\textcolor[rgb]{0.40,0.40,0.40}{##1}}}
\@namedef{PY@tok@ch}{\let\PY@it=\textit\def\PY@tc##1{\textcolor[rgb]{0.24,0.48,0.48}{##1}}}
\@namedef{PY@tok@cm}{\let\PY@it=\textit\def\PY@tc##1{\textcolor[rgb]{0.24,0.48,0.48}{##1}}}
\@namedef{PY@tok@cpf}{\let\PY@it=\textit\def\PY@tc##1{\textcolor[rgb]{0.24,0.48,0.48}{##1}}}
\@namedef{PY@tok@c1}{\let\PY@it=\textit\def\PY@tc##1{\textcolor[rgb]{0.24,0.48,0.48}{##1}}}
\@namedef{PY@tok@cs}{\let\PY@it=\textit\def\PY@tc##1{\textcolor[rgb]{0.24,0.48,0.48}{##1}}}


\makeatother

    \definecolor{incolor}{rgb}{0.0, 0.0, 0.5}
    \definecolor{outcolor}{rgb}{0.545, 0.0, 0.0}

    \sloppy 
    \hypersetup{
      breaklinks=true,  
      colorlinks=true,
      urlcolor=urlcolor,
      linkcolor=linkcolor,
      citecolor=citecolor,
      }
    
    \geometry{verbose,tmargin=1in,bmargin=1in,lmargin=1in,rmargin=1in}

\begin{document}
    
    \maketitle

	\begin{abstract}

\noindent Fukasawa introduced in [Fukasawa, Math Financ, 2012] two necessary conditions for no butterfly arbitrage which require that the $d_1$ and $d_2$ functions of the Black-Scholes formula have to be decreasing. In this article we characterize the set of smiles satisfying these conditions, using the parametrization of the smile in delta. We obtain a parametrization of the set via one real number and three positive functions. We also show that such smiles and their symmetric smiles can be transformed into smiles in the strike space by a bijection. Our result motivates the study of the challenging question of characterizing the subset of butterfly arbitrage-free smiles.

\end{abstract}

	\hypertarget{introduction}{%
\section{Introduction}\label{introduction}}

	FX OTC options are quoted in delta through the At-The-Money volatility,
and Risk Reversal and Strangle prices for different delta points. On
Equity and Commodity markets, many trading firms analyze the risk of
options portfolios at a given maturity on a grid of deltas instead of a
grid of strikes. This fundamentally relates to the high-view of the risk
as being driven at first order by a delta risk and a Vega risk, which
underpins in particular CME's SPAN methodology, which, although dated
back to 1980, is still widely used in CCPs.

	From this perspective, a first immediate question is whether the
transformation of a smile in the strike space to a smile in the delta
space is well-defined. We show that it is the case under the assumption
that the map \(k \to \delta(k):=N(d_1(k,\hat\sigma(k))\) is decreasing,
where \(d_1\) denotes the classical quantity in the Black-Scholes
formula, \(N\) the Gaussian cumulative density function, and \(k\) the
log-forward moneyness. Thanks to a result of Fukasawa
\cite{fukasawa2012normalizing}, we know that this property is fulfilled
by smiles with no butterfly arbitrage. Given the fact that a smile has
no butterfly arbitrage if and only if the symmetrical smile (in log
forward-moneyness) has no butterfly arbitrage, such smiles will also
have the property that the map
\(k \to \delta(k):=N(d_2(k,\hat\sigma(k))\) is decreasing, since the
\(d_1\) function for the symmetrical smile is minus the \(d_2\) function
for the original smile evaluated at the strike point corresponding to
minus the original log forward-moneyness point.

The second immediate question is how the no butterfly arbitrage
condition translates in the delta space. Surprisingly, this question is
essentially an open one. Some results on the absence of arbitrage for
implied volatility surfaces in the delta space can be found in \cite{lucic2021normalizing}.
We don't address the delta no butterfly arbitrage in this work, but we consider instead a weaker
condition which is that the 2 mappings
\(k \to d_{1,2}(k,\hat\sigma(k))\) are decreasing, and obtain an
explicit parameterization of the smiles in delta fulfilling those
conditions. This family can be useful in practice, since such smiles are
expected to be not too far from fully (strongly) no arbitrage ones.

	From a bird's eye theoretical view, another take on our work is to
consider the open question of parameterizing all the smiles with no
butterfly arbitrage. Such a parameterization would allow practitioners
to calibrate implied volatility smiles with the guaranty of fulfilling
no butterfly arbitrage and without range restrictions coming from the
usage of a particular model (like Gatheral's SVI, see for example
\cite{gatheral2011volatility}). This question is probably a difficult
one, and our work can be seen as a solution to the same question for a
notion of \emph{weak arbitrage}, through a move to the delta space. This
could suggest that there is some hope to solve also the initial question
in the delta space.

	We start in \cref{notations-and-preliminaries} with a description of the
delta notation and a detailed discussion of the no butterfly arbitrage
conditions in the delta space.

In \cref{from-a-smile-in-delta-to-a-smile-in-k} we characterize the set
of smiles in delta that can be converted to smiles in log-forward
moneyness, i.e.~the set of smiles that allow to unambiguously define a
delta function \(k\to\delta(k)\) from the relation
\(\delta(k)=N(d_1(k,\sigma(\delta(k)))\). Looking at a similar question
but from the strike perspective,
\cref{from-a-smile-in-k-to-a-smile-in-delta} achieves the
characterization of the set of smiles in delta which correspond to an
existing smile in strike, i.e.~smiles that are defined from a strike
function \(\delta\to k(\delta)\) satisfying
\(\delta=N(d_1(k(\delta),\hat\sigma(k(\delta)))\). The two sets are
shown to coincide, so that a smile in delta belonging to them can be
transformed into a smile in strike and re-transformed in the original
smile in delta.

Since arbitrage-free smiles have symmetrical smiles (in log-forward
moneyness) which are still arbitrage-free,
\cref{symmetric-transformation} deals with the set of smiles in delta
which satisfy the weak no arbitrage conditions of monotonicity of the
functions \(d_1\) and \(d_2\). In particular, the result in
\cref{theoParamT1} shows that such set can be parametrized by a real
number and three positive functions. The section ends with practical
examples of smiles in the weak no arbitrage set.

	\hypertarget{notations-and-preliminaries}{%
\section{Notations and
preliminaries}\label{notations-and-preliminaries}}

	In \cref{TableNotations} we summarize the notations that will be used in
the article.

\begin{table*}[!b]
    \begin{center}
        \begin{tabularx}{\linewidth}{ c X }
            \hline
            Symbol & Meaning\\
            \hline
            $\sigma$ & Smile in delta\\
            $\hat\sigma$ & Smile in strike\\
            $\Sigma$ & Set of smiles in delta\\
            $\mathcal D(\mathcal A,\mathcal B)$ & Set of continuous functions from $\mathcal A$ to $\mathcal B$ which admit a left/right derivative everywhere\\
            $\delta\to k$ & Subscript for sets and functions related to the transformation of a smile in delta into a smile in strike\\
            $k\to\delta$ & Subscript for sets and functions related to the transformation of a smile in strike into a smile in delta\\
            \hline
        \end{tabularx}
        \caption{\label{TableNotations} Symbols used in the article and their relative meaning.}
    \end{center}
\end{table*}

	For a fixed maturity \(T\), we denote by \(C_{BS}(k,\hat\sigma(k))\) the
Black-Scholes pricing formula for a call option with maturity \(T\),
strike \(F_0(T)e^{k}\), forward value \(F_0(T)\), discounting factor
\(D_0(T)\), and implied volatility \(\hat\sigma(k)\): \begin{align*}
C_{BS}(k,\hat\sigma(k)) &= D_0(T)F_0(T)\bigl(N(d_1(k,\hat\sigma(k)) - e^kN(d_2(k,\hat\sigma(k))\bigr),\\
d_{1,2}(k,\hat\sigma(k)) &= -\frac{k}{\hat\sigma(k)\sqrt{T}} \pm \frac{\hat\sigma(k)\sqrt{T}}{2}.
\end{align*} For easier notation, we will sometimes denote with
\(d_1(k)\) and \(d_2(k)\) the functions \(d_1(k,\hat\sigma(k))\) and
\(d_2(k,\hat\sigma(k))\) respectively.

	Let a number \(C(k)\) lie strictly between \(D_0(T)(F_0(T)-K)^+\) and
\(D_0(T)F_0(T)\). Then the \emph{implied volatility} \(\hat \sigma(k)\)
is well defined by the property \(C_{BS}(k, \hat \sigma(k)) = C(k)\). In
turn, this defines the quantity \emph{delta} by
\begin{equation}\label{eqDefDelta}
\delta(k):= N(d_1(k, \hat \sigma(k))).
\end{equation}

	\begin{remark}\label{remarkFromDeltaToK}

Observe that the pair $(\delta(k), \hat\sigma(k))$ allows to recover the log-forward moneyness $k$, indeed
$$k = \Bigl(-N^{-1}(\delta(k))+\frac{\hat\sigma(k) \sqrt T}2\Bigr)\hat\sigma(k) \sqrt T.$$

\end{remark}

	\hypertarget{no-butterfly-arbitrage}{%
\subsection{No butterfly arbitrage}\label{no-butterfly-arbitrage}}

	Under the hypothesis of a perfect market for the underlying asset and
for the Call options, using e.g.~\cite{tehranchi2020black}, a Call price
function with respect to the strike is free of butterfly arbitrage iff
it is

\begin{enumerate}
\def\labelenumi{\arabic{enumi}.}
\tightlist
\item
  convex,
\item
  non-increasing,
\item
  contained in the interval \(]D_0(T)(F_0(T)-K)^+, D_0(T)F_0(T)[\).
\end{enumerate}

In the case of Call prices specified with the Black-Scholes formula
through an implied volatility, the third property is automatically
satisfied. Indeed, the Black-Scholes formula is increasing with respect
to the implied volatility and it tends to the two bounds when the
implied volatility goes to \(0\) and \(\infty\) respectively. Given that
the third property is granted, then the first property implies the
second since an increasing and convex function cannot be bounded.

Therefore, in our context, there is no butterfly arbitrage iff Call
prices are convex.

As shown in Theorem 2.9 condition (IV3) of \cite{roper2010arbitrage}, in
the case of twice derivable implied volatility functions, the
requirement of prices' convexity corresponds to the requirement that the
function \begin{equation}\label{eqButtArbitrage}
\hat\sigma''(k) + d_1'(k,\hat\sigma(k))d_2'(k,\hat\sigma(k))\hat\sigma(k)
\end{equation} is non-negative. Such requirement is called the
Durrleman's Condition.

	\hypertarget{assumptions}{%
\subsection{Assumptions}\label{assumptions}}

	As shown in section 2.3 of \cite{martini2022no}, the condition of
vanishing Call prices for increasing strikes is not necessary. Such
condition is one-to-one with the behavior of the function \(d_1\) at
\(\infty\). In particular, it holds true iff
\begin{equation}\label{eqLimitD1}
\lim_{k\to\infty}d_1(k,\hat\sigma(k)) = -\infty.
\end{equation} This follows from the fact that the value of a Call price
at \(\infty\) is the normal quantile in \(d_1(\infty)\):
\begin{equation}\label{eqCInfty}
C(\infty) = D_0(T)F_0(T)N^{-1}(d_1(\infty))
\end{equation} as shown in point (A3) in the proof of Theorem 2.9 of
\cite{roper2010arbitrage}.

Furthermore, the bounds of the Call prices function imply that its slope
for \(K\) going to \(0\) lies between \(-1\) and \(0\). The limit
corresponds to \(-1\) iff there is no mass of the underlying in \(0\).
This in turn is equivalent to the fact that the left limit of the
function \(d_2\) is \(\infty\): \begin{equation}\label{eqLimitD2}
\lim_{k\to-\infty}d_2(k,\hat\sigma(k)) = \infty.
\end{equation} This follows from the fact that
\(d_2(-\infty)=-N^{-1}(P(S_T=0))\) as shown in proposition 2.4 of
\cite{fukasawa2010normalization}. Now, it holds
\(C(K)=\int_K^{\infty}(x-K)p_{S_T}(x)\,dx\) where \(p_{S_T}\) is the
probability distribution of the underlying price at maturity. By the
Leibniz integral rule, it holds \begin{align*}
C'(K) &= \int_K^{\infty}\frac{d}{dK}\bigl((x-K)p_{S_T}(x)\bigr)\,dx - (x-K)p_{S_T}(x)\bigl|_{x=K}\\
&= -\int_K^{\infty}p_{S_T}(x)\,dx\\
&= -P(S_T>K).
\end{align*} Letting \(K\) go to \(0\), this implies
\(P(S_T>0)=-C'(0)\), so that \(P(S_T=0)=1+C'(0)\). Then
\begin{equation}\label{eqCDer0}
d_2(-\infty)=-N^{-1}(1+C'(0)).
\end{equation}

	\begin{remark}\label{remarkHypothesis}

In the present article, we will take \cref{eqLimitD1,eqLimitD2} as assumptions, even though they are not necessary conditions for butterfly arbitrage-free Call prices. The first assumption implies that Call prices vanish for increasing strikes while the second assumption implies that Call prices have a slope of $-1$ for null strikes and that there is no mass of the underlying in $0$.

\end{remark}

The rationale of these assumptions relies on the fact that we will
require the delta function to have as image the whole interval
\(]0,1[\), as we will show in the below section. This requirement avoids
degenerate cases that may be less appealing to practitioners.

	Assumptions in \cref{remarkHypothesis} imply some easy consequences that
we state in the following proposition. The Lee conditions can be found
in \cite{lee2004moment}, Lemmas 3.1 and 3.3.

\begin{proposition}\label{PropLee}

Assumptions in \cref{remarkHypothesis} hold iff $d_1(k)$ and $d_2(k)$ are surjective. In such case, it holds:
\begin{itemize}
    \item Lee right wing condition: $\hat\sigma(k)\sqrt{T}<\sqrt{2k}$ for $k\gg0$;
    \item Lee left wing condition: $\hat\sigma(k)\sqrt{T}<\sqrt{-2k}$ for $k\ll0$.
\end{itemize}

\end{proposition}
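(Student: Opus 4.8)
Throughout I write $v(k):=\hat\sigma(k)\sqrt T>0$, so that $d_{1,2}(k)=-\frac{k}{v(k)}\pm\frac{v(k)}{2}$; since $\hat\sigma$ is continuous and positive, $d_1$ and $d_2$ are continuous functions of $k$. The first step is to record the two \emph{unconditional} wing limits. By the AM--GM inequality, for $k<0$ one has $d_1(k)=\frac{-k}{v(k)}+\frac{v(k)}{2}\ge\sqrt{-2k}$, so $d_1(k)\to+\infty$ as $k\to-\infty$; symmetrically, for $k>0$ one has $-d_2(k)=\frac{k}{v(k)}+\frac{v(k)}{2}\ge\sqrt{2k}$, so $d_2(k)\to-\infty$ as $k\to+\infty$. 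Because a continuous function on $\mathbb R$ has an interval as image, it follows that $d_1$ is surjective (i.e.\ $d_1(\mathbb R)=\mathbb R$) iff $\inf_k d_1(k)=-\infty$, and $d_2$ is surjective iff $\sup_k d_2(k)=+\infty$.

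Next I would prove the equivalence with the assumptions of \cref{remarkHypothesis}. The implication ``assumptions $\Rightarrow$ surjectivity'' is then immediate from the intermediate value theorem: \cref{eqLimitD1} gives $d_1(k)\to-\infty$ as $k\to+\infty$, which together with the unconditional limit $+\infty$ at $-\infty$ yields $d_1(\mathbb R)=\mathbb R$; likewise \cref{eqLimitD2} together with the unconditional limit $-\infty$ of $d_2$ at $+\infty$ yields $d_2(\mathbb R)=\mathbb R$. For the converse I would invoke the monotonicity of $d_1$ and $d_2$ available for the class of smiles studied here (the Fukasawa-type condition that both are non-increasing): a continuous, non-increasing surjection of $\mathbb R$ onto $\mathbb R$ necessarily tends to $-\infty$ at $+\infty$ and to $+\infty$ at $-\infty$, which is exactly \cref{eqLimitD1} for $d_1$ and \cref{eqLimitD2} for $d_2$.

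The two Lee wing inequalities I would then derive directly from the assumptions, with no further ingredient. For $k>0$, the inequality $\hat\sigma(k)\sqrt T<\sqrt{2k}$ is equivalent to $v(k)^2<2k$, hence to $\frac{v(k)}{2}<\frac{k}{v(k)}$, hence to $d_1(k)<0$; and \cref{eqLimitD1} forces $d_1(k)<0$ for all large enough $k$, giving the right wing condition. Symmetrically, for $k<0$ the inequality $\hat\sigma(k)\sqrt T<\sqrt{-2k}$ is equivalent to $v(k)^2<-2k$, hence to $\frac{-k}{v(k)}>\frac{v(k)}{2}$, hence to $d_2(k)>0$; and \cref{eqLimitD2} forces $d_2(k)>0$ for all sufficiently negative $k$, giving the left wing condition.

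I expect the one delicate point to be the converse half of the equivalence. The passage from ``$d_1$ takes arbitrarily negative values'' to ``$d_1(k)\to-\infty$ as $k\to+\infty$'' (and likewise for $d_2$) is genuinely false for an arbitrary continuous positive $\hat\sigma$: an oscillating volatility can keep $d_1$ bounded along one sequence of strikes while still letting it dip to $-\infty$ along another, so surjectivity alone does not force the limit. The argument must therefore rely on the monotonicity of $d_1$ and $d_2$ (or whatever structural hypothesis on the smile is in force at this point), not merely on continuity and positivity; everything else is routine.
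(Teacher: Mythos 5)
Your proof is correct and follows essentially the same route as the paper: the AM--GM bounds $d_1(k)\ge\sqrt{-2k}$ for $k\le 0$ and $d_2(k)\le-\sqrt{2k}$ for $k\ge 0$ give the unconditional wing limits, the intermediate value theorem upgrades the assumed limits \cref{eqLimitD1,eqLimitD2} to surjectivity, and the Lee inequalities are read off from the signs of $d_1$ for $k\gg 0$ and $d_2$ for $k\ll 0$. The one place you diverge is the converse implication (surjectivity $\Rightarrow$ the limits), which the paper dismisses as ``trivial'' but which you correctly flag as false for an arbitrary continuous positive smile, since surjectivity only forces $\liminf_{k\to\infty}d_1(k)=-\infty$ and an oscillating volatility can prevent the full limit. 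Your fix --- invoking the monotonicity of $d_1$ and $d_2$, under which a continuous surjection of $\mathbb R$ onto $\mathbb R$ must have the stated limits --- is the right one and is consistent with the paper's standing context (Fukasawa's conditions, and the sets $\Sigma_{\delta\to k}$, $\Sigma_{\text{WA}}$ where $d_1$, $d_2$ are strictly decreasing), even though that hypothesis is not formally in force at the point where the proposition is stated; this is a genuine gap in the paper's own proof that your version repairs.
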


\begin{proof}

For simple algebraic reasons (the arithmetic mean exceeds the geometric mean) the function $k \to d_1(k, \hat\sigma(k))$ is greater than $\sqrt{-2k}$ for every $k\leq0$ (see Lemma 3.5 of \cite{fukasawa2012normalizing}). As a consequence, $d_1(k, \hat\sigma(k))$ always goes to $\infty$ as $k$ goes to $-\infty$. Similarly with the same proof, the function $d_2(k,\hat\sigma(k))$ always goes to $-\infty$ at $\infty$ since it is smaller than $-\sqrt{2k}$. Then, under hypothesis in \cref{remarkHypothesis}, the functions $d_1(k)$ and $d_2(k)$ are surjective. The if implication is trivial.

Since the function $k\to d_1(k,\hat\sigma(k))$ goes to $-\infty$ on the right under \cref{remarkHypothesis}, it must be negative for $k$ large enough, which implies the right wing Lee condition. Similarly, since the function $k\to d_2(k,\hat\sigma(k))$ explodes on the left, it must be positive for $k$ small enough and the left wing Lee condition holds.

\end{proof}

Observe that Lee shows that the left wing condition holds for every
arbitrage-free smile iff \(P(S_T=0)<\frac{1}2\), which is indeed our
case since we suppose no mass in \(0\). Furthermore, the proof given for
only if implication can also be found in Lemmas 3.2 and 3.5 of
\cite{fukasawa2012normalizing}).

	In the present article we will often use necessary conditions for the
absence of butterfly arbitrage found by Fukasawa in Theorem 2.8 of
\cite{fukasawa2012normalizing}. We recall them in the following remark.

\begin{remark}[Weak no butterfly arbitrage conditions/Fukasawa necessary no butterfly arbitrage conditions]

If $k \to C(k)$ has no butterfly arbitrage, then $k \to d_1(k, \hat\sigma(k))$ and $k \to d_2(k, \hat\sigma(k))$ are strictly decreasing.

\end{remark}

	\hypertarget{no-butterfly-arbitrage-in-the-delta-notation}{%
\subsection{No butterfly arbitrage in the delta
notation}\label{no-butterfly-arbitrage-in-the-delta-notation}}

	As discussed above, a difficult challenge is to parameterize the set of
implied volatility functions corresponding to functions \(k \to C(k)\)
with no butterfly Arbitrage. We will denote with \(\Sigma_\text{A}\) the
set of such implied volatility functions (in the delta notation
\(\delta \to \sigma(\delta)\)) satisfying also hypothesis in
\cref{remarkHypothesis}. In the following, we will show that these delta
smiles can be transformed into moneyness smiles and, given the above
study, they will guarantee the monotonicity and surjectivity of
functions \(d_1(k,\hat\sigma(k))\) and \(d_2(k,\hat\sigma(k))\). Even
though we will not reach the aim of parameterizing the set
\(\Sigma_\text{A}\), we will be able to achieve the parameterization of
the larger set of smiles satisfying the two conditions on the functions
\(d_1(k,\hat\sigma(k))\) and \(d_2(k,\hat\sigma(k))\).

	It is easy to see that the function \(\delta(k)\) as defined in
\cref{eqDefDelta} goes to \(0\) as \(k\) goes to \(\infty\) iff
\cref{eqLimitD1} holds. The condition \(\delta(-\infty)=1\) corresponds
to the fact that \(d_1(k, \hat\sigma(k))\) goes to \(\infty\) as \(k\)
decreases, but this is already verified by arbitrage-free smiles as
shown above. More precisely, from \cref{eqCInfty} it follows
\[C(\infty) = D_0(T)F_0(T)\delta(\infty)\] so that the range of
\(\delta(k)\) is \(\bigl]\frac{C(\infty)}{D_0(T)F_0(T)},1\bigr[\).

Secondly, the assumption \cref{eqLimitD2} is the equivalent of the
assumption \cref{eqLimitD1} for the symmetric smile
\(\hat{\bar\sigma}(k)=\hat\sigma(-k)\). Indeed, it holds
\(d_1(k,\hat{\bar\sigma}(k))=-d_2(-k,\hat\sigma(-k))\), so that
condition \cref{eqLimitD2} is equivalent to requiring that the \(d_1\)
function for the symmetric smile
\(\bar d_1(k):=d_1(k,\hat{\bar\sigma}(k))\) satisfies \cref{eqLimitD1}.
Denoting with \(\bar\delta(k)\) the function \(N(\bar d_1(k))\), using
\cref{eqCDer0} it is immediate that \[\bar\delta(\infty)=1+C'(0).\] On
the other hand, the limit of \(\bar\delta(k)\) for \(k\) decreasing is
still \(1\). Indeed, the symmetric smile is arbitrage-free iff the
original smile is, so from non-arbitrage arguments, the limit of
\(\bar d_1(k)\) is \(\infty\). Then, the range of the symmetric delta
\(\bar\delta(k)\) is \(]1+C'(0),1[\).

The relations between \(\hat\sigma\) and its symmetric correspondent,
could have been reached also looking at Theorem 2.2.6. of
\cite{tehranchi2020black}. Indeed, it is shown that
\(P(S_T=0)=\frac{\bar C(\infty)}{\bar D_0(T)\bar F_0(T)}\) where bars
denote prices for the symmetric process. From the above section, this
probability is also equivalent to \(1-C'(0)\) and from this section,
\(\frac{\bar C(\infty)}{\bar D_0(T)\bar F_0(T)}\) is equivalent to
\(\bar\delta(\infty)\). Then \(\bar\delta(\infty)=1+C'(0)\).

	Under assumptions in \cref{remarkHypothesis}, ranges for \(\delta\) and
\(\bar\delta\) are both \(]0,1[\).

	\hypertarget{from-a-smile-in-delta-to-a-smile-in-k}{%
\section{\texorpdfstring{From a smile in \(\delta\) to a smile in
\(k\)}{From a smile in \textbackslash delta to a smile in k}}\label{from-a-smile-in-delta-to-a-smile-in-k}}

	In the industry, it is not an unusual practice to calibrate volatility
smiles in the delta parameterization, instead of the usual strike one,
especially when dealing with Forex products. When options are quoted on
a grid of maturities and deltas, such choice is natural and easy to be
exploited. When, on the other hand, options are quoted on a grid of
maturities and strikes, the permutation between delta and strike smiles
is not straightforward.

Indeed, it is firstly necessary to transform data in strikes into data
in deltas. For a fixed maturity, the procedure reads:

\begin{enumerate}
\def\labelenumi{\arabic{enumi}.}
\tightlist
\item
  compute volatilities \(\hat\sigma(k)\) for the quoted strikes
  \(F_0(T)e^k\) using the inversion of the Black-Scholes pricing
  formula;
\item
  compute corresponding deltas \(\delta(k)\) with \cref{eqDefDelta} and
  uniquely associate them to the smile values;
\item
  interpolate pairs
  \((\delta(k),\hat\sigma(k)) = (\delta,\sigma(\delta))\) with a chosen
  method in order to recover the continuous smile
  \(\delta\to\sigma(\delta)\).
\end{enumerate}

At this point, different operations can be done using the smiles in
delta, such as collecting historical values, stressing data, doing
statistics, and so on. When it is necessary to come back to the strike
notation, for example to compute a stressed option price with known
strike, the smile in delta must have the ability to be converted into a
smile in strike.

	Since arbitrage-free call prices are convex, they admit a left/right
derivative everywhere, and so do their implied volatilities. For this
reason, we will work with functions that admit left and right
derivatives. In the following, when talking about derivative and using
the appendix \('\), we refer to either the left or the right derivative.
For this aim, as anticipated in \cref{TableNotations}, let us define the
set \(\mathcal D\) of continuous functions from \(\mathcal A\) to
\(\mathcal B\) which admit left/right derivative everywhere:
\[\mathcal D(\mathcal A,\mathcal B) := \bigl\{f:\mathcal A\longrightarrow\mathcal B\, \bigl|\, \text{$f$ continuous and with a left/right derivative everywhere}\bigr\}.\]

In this section we aim to find conditions under which any positive and
derivable at left/right function \(\delta \to \sigma(\delta)\) defined
on \(]0,1[\) allows to recover a function \(k \to \hat \sigma(k)\). We
call \(\Sigma_{\delta\to k}\) the set of smiles
\(\delta\to\sigma(\delta)\) with this property, in particular:
\begin{align*}
\Sigma_{\delta\to k} := \bigl\{\delta\to\sigma(\delta)\in \mathcal D(]0,1[,\mathbb{R}^+)\, \bigl|\, \forall k\in\mathbb R\, \exists! \delta_{\delta\to k}(k)\, |\, \delta_{\delta\to k}(k) = N(d_1(k, \sigma(\delta_{\delta\to k}(k)))),&\\
\{\delta_{\delta\to k}(k)\, |\, k\in\mathbb R\} = ]0,1[\bigr\}.&
\end{align*}

The condition of a.s. differentiability is not necessary for implied
volatility smiles. We could suppose it under the belief that it is not a
too restrictive condition. A result regarding derivatives of the implied
volatility can be found in Theorem 5.1 of \cite{rogers2010can} where the
authors show the existence of the right and left derivatives of the
implied volatility with some arbitrage bounds.

For a delta smile in \(\Sigma_{\delta\to k}\), the corresponding smile
in \(k\) is defined as
\(\hat\sigma(k):=\sigma(\delta_{\delta\to k}(k))\). Note that there is
indeed a question, because in the above procedure it could happen that
two different pairs \((\delta, \sigma(\delta))\) produce the \emph{same}
strike, meaning that there is no way to define the value
\(\hat \sigma(k)\). The second condition defining
\(\Sigma_{\delta\to k}\) is added in order to avoid degenerate cases not
useful for practitioners.

	From the first condition of \(\Sigma_{\delta\to k}\), a function
\(\delta\to\sigma(\delta)\) in \(\Sigma_{\delta\to k}\) allows to define
a function \(k\to\delta_{\delta\to k}(k)\), and from the second
condition, such function is surjective. It is worth observing that the
function \(k\to\delta_{\delta\to k}(k)\) is also injective. Indeed, if
two values \(k_1\) and \(k_2\) have the same delta
\(\delta_{\delta\to k}(k_{1})=\delta_{\delta\to k}(k_{2})\), then they
also have the same volatility since
\(\hat\sigma(k_1)=\sigma(\delta_{\delta\to k}(k_1))=\sigma(\delta_{\delta\to k}(k_2))=\hat\sigma(k_2)\).
From \cref{remarkFromDeltaToK}, the pair
\((\delta_{\delta\to k}(k_1),\hat\sigma(k_1)) = (\delta_{\delta\to k}(k_2),\hat\sigma(k_2))\)
is associated to a unique log-forward moneyness, so that \(k_1=k_2\).
Since \(\delta\to\sigma(\delta)\) is continuous and since we have proven
that the function \(k\to\delta_{\delta\to k}(k)\) is one-to-one, it is
also monotone, which implies it is a.s. differentiable.

	Let us define the function \begin{equation}\label{eqDefL}
l(\delta):=\Bigl(N^{-1}(\delta)-\frac{\sigma(\delta) \sqrt T}2\Bigr)\sigma(\delta) \sqrt T.
\end{equation}

In the following Lemma, we characterize the set \(\Sigma_{\delta\to k}\)
through its function \(l\).

\begin{lemma}

It holds
$$\Sigma_{\delta\to k} = \bigl\{\delta\to\sigma(\delta)\in \mathcal D(]0,1[,\mathbb{R}^+)\, |\, l\ \text{strictly increasing surjective}\bigr\}.$$

\end{lemma}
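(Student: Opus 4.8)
The plan is to notice that, up to a sign, $l$ is exactly the candidate inverse of the delta map $k\mapsto\delta_{\delta\to k}(k)$ appearing in \cref{remarkFromDeltaToK}, so that membership in $\Sigma_{\delta\to k}$ is just a bijectivity statement about $l$; the remaining work is to upgrade ``bijective'' to ``strictly increasing surjective''. \textbf{Step 1 (reduction to $l$).} I would fix $k\in\mathbb R$ and $\delta\in]0,1[$, expand $d_1(k,\sigma(\delta))=-\frac{k}{\sigma(\delta)\sqrt T}+\frac{\sigma(\delta)\sqrt T}{2}$, and use that $N\colon\mathbb R\to]0,1[$ is a bijection to rewrite $\delta=N(d_1(k,\sigma(\delta)))$ as $N^{-1}(\delta)=-\frac{k}{\sigma(\delta)\sqrt T}+\frac{\sigma(\delta)\sqrt T}{2}$; multiplying by $\sigma(\delta)\sqrt T>0$ and rearranging gives exactly $k=-l(\delta)$. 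Hence for a fixed $k$ the set of $\delta\in]0,1[$ solving the fixed-point equation defining $\delta_{\delta\to k}(k)$ is $l^{-1}(\{-k\})$. The first defining condition of $\Sigma_{\delta\to k}$ — existence and uniqueness of $\delta_{\delta\to k}(k)$ for every $k$ — therefore says precisely that every fibre $l^{-1}(\{c\})$, $c\in\mathbb R$, is a singleton, i.e.\ that $l\colon]0,1[\to\mathbb R$ is a bijection; and in that case $\delta_{\delta\to k}(k)=l^{-1}(-k)$, whose image over $k\in\mathbb R$ is $l^{-1}(\mathbb R)=]0,1[$, so the second defining condition holds automatically. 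Thus $\sigma\in\Sigma_{\delta\to k}$ iff $l$ is a bijection from $]0,1[$ onto $\mathbb R$.

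\textbf{Step 2 (fixing the orientation).} Since $\sigma$ and $N^{-1}$ are continuous on $]0,1[$, $l$ is continuous, so a bijection $l\colon]0,1[\to\mathbb R$ is automatically strictly monotone; it remains only to exclude the strictly decreasing case. If $l$ were strictly decreasing and surjective, then $l(\delta)\to+\infty$ as $\delta\to0^+$. But $l(\delta)=N^{-1}(\delta)\,\sigma(\delta)\sqrt T-\tfrac12\sigma(\delta)^2T$, and for $\delta<\tfrac12$ one has $N^{-1}(\delta)<0$, so both summands are strictly negative and $l(\delta)<0$ on $]0,\tfrac12[$ — a contradiction. Hence $l$ is strictly increasing.

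\textbf{Step 3 (conclusion).} Combining the two steps, $\sigma\in\Sigma_{\delta\to k}$ iff $l$ is a continuous bijection $]0,1[\to\mathbb R$ iff $l$ is strictly increasing and surjective: the forward implication is Steps 1--2, and the converse is immediate since a strictly increasing surjection of an interval onto $\mathbb R$ is a bijection (and $l$ is always continuous). This is the asserted set equality.

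I expect the only substantive points to be the bookkeeping in Step 1 — matching both conditions in the definition of $\Sigma_{\delta\to k}$ to the single statement ``$l$ is a bijection'', and in particular seeing that the surjectivity-of-$\delta_{\delta\to k}$ condition is free once the existence/uniqueness condition holds — together with the sign computation in Step 2, which is exactly what lets the lemma assert \emph{monotonicity} of $l$ rather than mere bijectivity; everything else is routine.
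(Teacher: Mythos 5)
Your proof is correct and follows essentially the same route as the paper's: rewrite the fixed-point equation $\delta = N(d_1(k,\sigma(\delta)))$ as $-k = l(\delta)$, identify the existence/uniqueness condition with bijectivity of $l$ onto $\mathbb R$, and pin down the increasing direction from the sign of $l$ on $]0,\tfrac12[$. Your bookkeeping is in fact a bit tighter than the paper's (you note explicitly that the surjectivity-of-$\delta_{\delta\to k}$ condition is automatic once $l$ is a bijection, and you derive monotonicity of $l$ directly from continuity plus injectivity rather than via monotonicity of $k\mapsto\delta_{\delta\to k}(k)$), but the substance is the same.
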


\begin{proof}

Reformulating the conditions of $\Sigma_{\delta\to k}$, we ask that given a continuous positive function $\delta \to \sigma(\delta)$ defined on $]0,1[$,
\begin{align*}
&\forall k\in\mathbb R\, \exists! \delta_{\delta\to k}(k)\, |\, -k = l(\delta_{\delta\to k}(k)),\\
& \bigl\{\delta_{\delta\to k}(k)\, |\, k\in\mathbb R\bigr\} = ]0,1[.
\end{align*}
This implies that the function $k\to l(\delta_{\delta\to k}(k))$ must be well-defined, monotone and surjective. For a delta smile in $\Sigma_{\delta\to k}$, the function $k\to\delta_{\delta\to k}(k)$ is monotone and surjective, so that we are requiring $\delta\to l(\delta)$ to be monotone and surjective from the interval $]0,1[$ to $]-\infty,+\infty[$. Since $l(0)$ is negative, $l$ must be strictly increasing.

These conditions are necessary but also sufficient. Indeed, the uniqueness of $\delta_{\delta\to k}(k)$ follows from the fact that if for a fixed $k$ there are two different $\delta_{\delta\to k}(k)$, say $\delta_1$ and $\delta_2$, then $l(\delta_1)\neq l(\delta_2)$ since $l$ is monotone. However, $l(\delta_{\delta\to k}(k))=-k$, so the two values of $l$ should be the same. The existence of $\delta_{\delta\to k}(k)$ is guaranteed by the surjectivity of $l$.

For the second condition, firstly observe that from the relation $-k=l(\delta_{\delta\to k}(k))$, it follows $l(\delta_{\delta\to k}(\infty))=-\infty$ and since $l$ is injective, it must hold $\delta_{\delta\to k}(\infty)=0$. Similarly to show $\delta_{\delta\to k}(-\infty)=1$. These observations guarantee the full range of $\delta_{\delta\to k}(k)$.

\end{proof}

	\hypertarget{from-a-smile-in-k-to-a-smile-in-delta}{%
\section{\texorpdfstring{From a smile in \(k\) to a smile in
\(\delta\)}{From a smile in k to a smile in \textbackslash delta}}\label{from-a-smile-in-k-to-a-smile-in-delta}}

	Starting from a smile in log-forward moneyness \(k\to\hat\sigma(k)\) and
making \(k\) move in \(\mathbb R\), one obtains a collection of pairs
\((\delta(k), \hat\sigma(k))\). There is no guarantee that such a
collection allows to define a function \(\delta \to \sigma(\delta)\).
Indeed, in order to define a function in \(\delta\), for every
\(\delta\in]0,1[\), there must exist a unique \(k\) such that
\(\delta=N(d_1(k,\hat\sigma(k))\). In such a way one can define
unambiguously a function \(\sigma\) by the equality
\(\sigma(\delta)=\hat\sigma(k)\) for all \(k\)s. We define
\(\Sigma_{k\to\delta}\) as the set of delta smiles obtained from a
strike smile: \begin{align*}
\Sigma_{k\to\delta} := \bigl\{\delta\to\sigma(\delta)\in \mathcal D(]0,1[,\mathbb{R}^+)\, \bigl|\, \exists k\to\hat\sigma(k)\in \mathcal D(\mathbb R,\mathbb{R}^+)\ \text{s.t.}\ \forall\delta\in]0,1[\, \exists!k_{k\to\delta}(\delta)\ \text{s.t.}&\\
\delta=N(d_1(k_{k\to\delta}(\delta),\hat\sigma(k_{k\to\delta}(\delta))),&\\
\hat\sigma(k_{k\to\delta}(\delta))=\sigma(\delta),&\\
\bigl\{k_{k\to\delta}(\delta)\, |\, \delta\in]0,1[\bigr\} = \mathbb R&\bigr\}.
\end{align*}

This definition can be simplified thanks to the following Lemma:

	\begin{lemma}\label{LemmaSigmaKToDelta}

It holds
\begin{align*}
\Sigma_{k\to\delta} = \bigl\{\delta\to\sigma(\delta)\in \mathcal D(]0,1[,\mathbb{R}^+)\, \bigl|\, \exists k\to\hat\sigma(k)\in \mathcal D(\mathbb R,\mathbb{R}^+)\ \text{s.t.}&\\
k\to d_1(k,\hat\sigma(k))\ \text{strictly decreasing surjective},&\\
\hat\sigma(k_{k\to\delta}(\delta))=\sigma(\delta)&\bigr\}.
\end{align*}

\end{lemma}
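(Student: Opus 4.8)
The plan is to prove the set equality in Lemma \ref{LemmaSigmaKToDelta} by showing that the condition "$\exists k\to\hat\sigma(k)$ such that for all $\delta$ there is a unique $k_{k\to\delta}(\delta)$ with $\delta = N(d_1(k_{k\to\delta}(\delta),\hat\sigma(k_{k\to\delta}(\delta))))$, $\hat\sigma(k_{k\to\delta}(\delta))=\sigma(\delta)$, and the image of $k_{k\to\delta}$ is all of $\mathbb R$" is equivalent to the condition "$\exists k\to\hat\sigma(k)$ such that $k\to d_1(k,\hat\sigma(k))$ is strictly decreasing and surjective and $\hat\sigma(k_{k\to\delta}(\delta))=\sigma(\delta)$". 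The natural move is to rewrite everything in terms of the function $\delta_{k\to\delta}(k) := N(d_1(k,\hat\sigma(k)))$ and to observe that, since $N$ is a continuous strictly increasing bijection from $\mathbb R$ onto $]0,1[$, the statement "for every $\delta\in]0,1[$ there exists a unique $k$ with $\delta = N(d_1(k,\hat\sigma(k)))$ and the set of such $k$ ranges over $\mathbb R$" is exactly the statement that $k\mapsto N(d_1(k,\hat\sigma(k)))$ is a bijection from $\mathbb R$ onto $]0,1[$, equivalently that $k\mapsto d_1(k,\hat\sigma(k))$ is a bijection from $\mathbb R$ onto $\mathbb R$.

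\textbf{Forward direction.} First I would take $\delta\to\sigma(\delta)\in\Sigma_{k\to\delta}$ with its witnessing strike smile $\hat\sigma$ and $k_{k\to\delta}$. From the defining properties, the map $k\mapsto d_1(k,\hat\sigma(k))$ is a bijection $\mathbb R\to\mathbb R$ (existence of a unique preimage for each $\delta$, plus full range). It remains to upgrade "bijection" to "strictly decreasing". Here I would invoke the continuity of $k\mapsto d_1(k,\hat\sigma(k))$ — which holds because $\hat\sigma\in\mathcal D(\mathbb R,\mathbb R^+)$ is continuous and $d_1$ is a continuous function of $(k,\hat\sigma(k))$ with $\hat\sigma>0$ — so that a continuous bijection of $\mathbb R$ onto $\mathbb R$ is automatically strictly monotone. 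To pin down the sign, I would use \cref{eqDefDelta}/\cref{eqDefL}, or more directly \cref{remarkFromDeltaToK}: one computes (as in \cref{remarkFromDeltaToK}) that $k = \bigl(-d_1(k,\hat\sigma(k)) + \hat\sigma(k)\sqrt T\bigr)\hat\sigma(k)\sqrt T - \tfrac{(\hat\sigma(k)\sqrt T)^2}{2}$ — or simply note that as $k\to+\infty$ one must have $d_1\to-\infty$ (Lee's right-wing bound from \cref{PropLee} forces $d_1(k)<0$ for large $k$, and surjectivity forces the limit to be $-\infty$), hence the monotone bijection is decreasing. So the forward inclusion holds with the same $\hat\sigma$.

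\textbf{Reverse direction.} Conversely, suppose $\hat\sigma\in\mathcal D(\mathbb R,\mathbb R^+)$ makes $k\mapsto d_1(k,\hat\sigma(k))$ strictly decreasing and surjective, and $\hat\sigma(k_{k\to\delta}(\delta))=\sigma(\delta)$. Then $k\mapsto N(d_1(k,\hat\sigma(k)))$ is a strictly decreasing continuous bijection of $\mathbb R$ onto $]0,1[$, so for each $\delta\in]0,1[$ there is exactly one $k=k_{k\to\delta}(\delta)$ with $\delta = N(d_1(k_{k\to\delta}(\delta),\hat\sigma(k_{k\to\delta}(\delta))))$; the inverse map $\delta\mapsto k_{k\to\delta}(\delta)$ has image $\mathbb R$ by surjectivity; and $\sigma(\delta):=\hat\sigma(k_{k\to\delta}(\delta))$ is well-defined, positive, and continuous (composition of continuous maps), so it lies in $\mathcal D(]0,1[,\mathbb R^+)$ — differentiability of $\sigma$ following as before from the monotonicity of $k_{k\to\delta}$ together with $\hat\sigma\in\mathcal D$. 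Hence $\sigma\in\Sigma_{k\to\delta}$.

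\textbf{The main obstacle} I anticipate is purely the bookkeeping of "bijective $\Rightarrow$ strictly monotone $\Rightarrow$ strictly decreasing": the first implication needs the continuity of $k\mapsto d_1(k,\hat\sigma(k))$ (immediate from $\hat\sigma$ continuous and $\hat\sigma>0$), and the orientation needs an asymptotic argument — either quoting Lee's wing conditions via \cref{PropLee} or directly reading off the sign of the slope of $l$ from \cref{eqDefL}/\cref{remarkFromDeltaToK}. Everything else is a routine unfolding of the two set definitions and the observation that $N:\mathbb R\to]0,1[$ is a homeomorphism, which lets one pass freely between statements about $\delta_{k\to\delta}(k)=N(d_1(k,\hat\sigma(k)))$ and statements about $d_1(k,\hat\sigma(k))$ itself.
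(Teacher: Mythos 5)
Your proof is correct and follows essentially the same route as the paper's: unfold the two definitions, observe that $N$ is a homeomorphism of $\mathbb R$ onto $]0,1[$ so the existence/uniqueness/range conditions amount to $k\mapsto d_1(k,\hat\sigma(k))$ being a continuous bijection of $\mathbb R$ onto itself, hence strictly monotone, and then fix the orientation from the formula for $d_1$. One small caution: your parenthetical appeal to Proposition~\ref{PropLee} for the sign is circular (the right-wing Lee bound there is \emph{derived from} the assumption $d_1\to-\infty$), but your alternative of reading the sign off the formula is exactly the paper's one-line justification --- concretely, $d_1(k,\hat\sigma(k))\geq\sqrt{-2k}$ for $k\leq 0$ by AM--GM, so the monotone surjection onto $\mathbb R$ cannot be increasing.
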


	\begin{proof}

The existence and uniqueness of a $k_{k\to\delta}$ in the first condition defining $\Sigma_{k\to\delta}$ can be translated in requiring that the map $k \to N(d_1(k, \hat \sigma(k)))$ is strictly monotone. The second condition defining $\Sigma_{k\to\delta}$ requires the surjectivity of $k_{k\to\delta}$. Equivalently, the two conditions hold iff the map $k \to d_1(k, \hat \sigma(k))$ is strictly monotone and surjective in $]-\infty,\infty[$. The decreasing direction is due to the formula defining the map $k\to d_1(k,\hat\sigma(k))$.

\end{proof}

	A function \(\delta\to\sigma(\delta)\in \Sigma_{k\to\delta}\) allows to
define a function \(k\to\delta_{k\to\delta}(k)\) such that
\(\delta_{k\to\delta}(k)=N(d_1(k,\hat\sigma(k)))\), and such
\(\delta_{k\to\delta}(k)\) is bijective, with inverse
\(k_{k\to\delta}(\delta)=d_1^{-1}(N^{-1}(\delta))\). Since \(d_1(k)\) is
strictly decreasing, it is a.s. differentiable and so is its inverse, so
that the function \(k_{k\to\delta}(\delta)\) is a.s. differentiable.

	We now look at the relation between smiles in delta that can be
transformed into smiles in strike and smiles in strike that can be
transformed in smiles in delta. It turns out that the image of the
latter set in the space of smiles in delta actually coincides with the
former set. In other words, a smile in delta obtained through a smile in
strike can be re-transformed into the original smile in strike. Also,
any smile in delta that can be transformed in a smile in strike can be
recovered from its transformation into a smile in strike.

	\begin{proposition}\label{PropEqualSets}

It holds $\Sigma_{\delta\to k}=\Sigma_{k\to\delta}$ and $\delta_{\delta\to k}=\delta_{k\to\delta}$.

\end{proposition}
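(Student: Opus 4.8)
The plan is to use Remark~\ref{remarkFromDeltaToK} as the bridge: if $\hat\sigma(k)=\sigma(\delta)$ and $\delta=N(d_1(k,\hat\sigma(k)))$, then $k=-l(\delta)$ with $l$ as in \eqref{eqDefL}, and conversely solving $l(\delta)=-k$ for the quantile gives back $d_1(k,\hat\sigma(k))=N^{-1}(\delta)$. So in both constructions the $k$--$\delta$ correspondence is governed by the single relation $k=-l(\delta)$, i.e. $\delta=l^{-1}(-k)$. I would first replace the two set memberships by their Lemma characterizations: $\sigma\in\Sigma_{\delta\to k}$ iff $\sigma\in\mathcal D(]0,1[,\mathbb R^+)$ and $l$ is strictly increasing surjective; $\sigma\in\Sigma_{k\to\delta}$ iff $\sigma\in\mathcal D(]0,1[,\mathbb R^+)$ and there is $\hat\sigma\in\mathcal D(\mathbb R,\mathbb R^+)$ with $k\mapsto d_1(k,\hat\sigma(k))$ strictly decreasing surjective and $\hat\sigma\circ k_{k\to\delta}=\sigma$. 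Then prove the two inclusions.

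For $\Sigma_{k\to\delta}\subseteq\Sigma_{\delta\to k}$: take $\sigma\in\Sigma_{k\to\delta}$ with witness $\hat\sigma$. As already observed before the statement, $\delta_{k\to\delta}(k)=N(d_1(k,\hat\sigma(k)))$ is a strictly decreasing bijection of $\mathbb R$ onto $]0,1[$. Composing the condition $\hat\sigma(k_{k\to\delta}(\delta))=\sigma(\delta)$ with $\delta_{k\to\delta}$ gives $\hat\sigma(k)=\sigma(\delta_{k\to\delta}(k))$, so Remark~\ref{remarkFromDeltaToK} reads $l(\delta_{k\to\delta}(k))=-k$ for all $k$. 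Hence $l=-k_{k\to\delta}$ on $]0,1[$; since $\delta_{k\to\delta}$ is a strictly decreasing bijection, $k_{k\to\delta}$ is a strictly decreasing bijection $]0,1[\to\mathbb R$, so $l$ is strictly increasing and surjective, and the first Lemma gives $\sigma\in\Sigma_{\delta\to k}$. Finally $\delta_{\delta\to k}(k)$ is the unique $\delta$ with $l(\delta)=-k$, and $\delta_{k\to\delta}(k)$ is one such $\delta$, so $\delta_{\delta\to k}=\delta_{k\to\delta}$.

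For $\Sigma_{\delta\to k}\subseteq\Sigma_{k\to\delta}$: take $\sigma\in\Sigma_{\delta\to k}$, so $l$ is a strictly increasing surjection $]0,1[\to\mathbb R$; set $\hat\sigma(k):=\sigma(l^{-1}(-k))$. Expanding \eqref{eqDefL} in $l(l^{-1}(-k))=-k$ and dividing by $\hat\sigma(k)\sqrt T$ yields $N^{-1}(l^{-1}(-k))=-k/(\hat\sigma(k)\sqrt T)+\hat\sigma(k)\sqrt T/2=d_1(k,\hat\sigma(k))$, so $k\mapsto d_1(k,\hat\sigma(k))=N^{-1}(l^{-1}(-k))$ is a composition of the strictly decreasing bijection $k\mapsto -k$ with the increasing bijections $l^{-1}$ and $N^{-1}$, hence a strictly decreasing surjection of $\mathbb R$ onto $\mathbb R$. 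The associated $k_{k\to\delta}$ is then the inverse of $k\mapsto N(d_1(k,\hat\sigma(k)))=l^{-1}(-k)$, i.e. $k_{k\to\delta}(\delta)=-l(\delta)$, whence $\hat\sigma(k_{k\to\delta}(\delta))=\sigma(l^{-1}(l(\delta)))=\sigma(\delta)$. By Lemma~\ref{LemmaSigmaKToDelta} it only remains to verify $\hat\sigma\in\mathcal D(\mathbb R,\mathbb R^+)$; positivity and continuity are immediate, and for the one-sided derivatives I would argue that $l^{-1}$, being the inverse of the strictly monotone $l\in\mathcal D$, has one-sided derivatives everywhere, and that composing such a monotone map with $\sigma\in\mathcal D$ again has one-sided derivatives everywhere, so $\hat\sigma=\sigma\circ l^{-1}(-\cdot)$ lies in $\mathcal D(\mathbb R,\mathbb R^+)$.

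The algebra turning the definition of $l$ into $k=-l(\delta)$ and the monotonicity/surjectivity bookkeeping are routine. The one \emph{delicate} point, where I would spend the most care, is the last step of the second inclusion: showing that inverting $l$ and composing with $\sigma$ preserves membership in $\mathcal D$ (one-sided differentiability of the inverse of a strictly monotone $\mathcal D$-function, and of a composition of $\mathcal D$-functions one of which is monotone) — this is the only place where anything beyond the identity $k=-l(\delta)$ is used.
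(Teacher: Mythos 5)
Your proof is correct and follows the same overall architecture as the paper's (two inclusions, with the identity $l(\delta(k))=-k$ from Remark~\ref{remarkFromDeltaToK} as the bridge), but the key technical step of the second inclusion is handled differently, and in a way worth comparing. To show that $k\mapsto d_1(k,\hat\sigma(k))$ is strictly decreasing, the paper differentiates the two relations $l(\delta_{\delta\to k}(k))=-k$ and $\delta_{\delta\to k}(k)=N(d_1(k,\hat\sigma(k)))$ off a null set and reads the sign of $d_1'$ from $d_1'(k)\,l'(\delta_{\delta\to k}(k))=-1/n(d_1(k))$; you instead exhibit $d_1(k,\hat\sigma(k))=N^{-1}(l^{-1}(-k))$ explicitly as a composition of monotone bijections. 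Your route is more economical and arguably more robust: it gives strict monotonicity and surjectivity pointwise, with no appeal to almost-everywhere differentiation or to the nonvanishing of the derivatives involved. Likewise, in the first inclusion you route everything through the $l$-characterization of $\Sigma_{\delta\to k}$ (getting $l=-k_{k\to\delta}$ directly), whereas the paper argues uniqueness of the solution $\delta$ of $\delta=N(d_1(k,\sigma(\delta)))$ by hand; both are valid, yours is shorter.

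One caveat on the point you yourself flag as delicate. Your claim that the inverse of a strictly monotone function in $\mathcal D$ again has one-sided derivatives everywhere is not true in general: if a one-sided derivative of $l$ vanishes at some point (which strict monotonicity does not preclude), the corresponding one-sided derivative of $l^{-1}$ is infinite, so $\hat\sigma=\sigma\circ l^{-1}(-\cdot)$ need not lie in $\mathcal D(\mathbb R,\mathbb R^+)$ in the everywhere-one-sided-derivative sense without an extra hypothesis (e.g.\ that the one-sided derivatives of $l$ are bounded away from $0$ locally, or that $\sigma'$ compensates). To be fair, the paper's own proof does not address the $\mathcal D$-membership of the constructed $\hat\sigma$ either and works only with a.s.\ differentiability, so this is a shared loose end rather than a defect specific to your argument; but the sketch you give for closing it does not quite work as stated.
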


	\begin{proof}

We firstly prove $\Sigma_{k\to\delta}\subset \Sigma_{\delta\to k}$. For a function $\delta\to\sigma(\delta)$ in $\Sigma_{k\to\delta}$ and a fixed $k\in\mathbb{R}$, there exists a unique $\delta_{k\to\delta}(k)$ such that $\delta_{k\to\delta}(k) = N(d_1(k,\hat\sigma(k))$. Given the definition of $\sigma(\delta)$, it holds $\hat\sigma(k)=\sigma(\delta_{k\to\delta}(k))$, so that $\delta_{k\to\delta}(k) = N(d_1(k,\sigma(\delta_{k\to\delta}(k)))$. Suppose there is a second $\tilde\delta$ such that $\tilde\delta = N(d_1(k,\sigma(\tilde\delta)))$. For such $\tilde\delta$ there is a unique $k_{k\to\delta}(\tilde\delta)$ such that $\tilde\delta = N(d_1(k_{k\to\delta}(\tilde\delta),\hat\sigma(k_{k\to\delta}(\tilde\delta))))$, furthermore $\hat\sigma(k_{k\to\delta}(\tilde\delta))=\sigma(\tilde\delta)$. Then, $N(d_1(k,\sigma(\tilde\delta))) = \tilde\delta = N(d_1(k_{k\to\delta}(\tilde\delta),\sigma(\tilde\delta)))$, from which it immediately follows $k=k_{k\to\delta}(\tilde\delta)$. Then for $k$, it both holds $\delta_{k\to\delta}(k) = N(d_1(k,\hat\sigma(k))$ and $\tilde\delta = N(d_1(k,\hat\sigma(k)))$, so that $\delta_{k\to\delta}(k)=\tilde\delta$. The function $\delta_{k\to\delta}(k)$, in particular, corresponds to the function $\delta_{\delta\to k}(k)$ defining the set $\Sigma_{\delta\to k}$.

On the other hand, we now look at the relation $\Sigma_{\delta\to k}\subset \Sigma_{k\to\delta}$. Let $\delta\to\sigma(\delta)$ in $\Sigma_{\delta\to k}$, then we can define a function $k\to\hat\sigma(k)$ such that $\hat\sigma(k):=\sigma(\delta_{\delta\to k}(k))$, where $\delta_{\delta\to k}(k)$ is the only $\delta$ satisfying $\delta=N(d_1(k,\sigma(\delta))$. The function $d_1(k)=d_1(k,\hat\sigma(k))$ is surjective since for any $\delta\in]0,1[$ there exists a $k$ such that $\delta=\delta_{\delta\to k}(k)$ so that $\delta=N(d_1(k,\sigma(\delta))=N(d_1(k,\hat\sigma(k))$. To prove the monotonicity, observe that for any $k$ there is a unique $\delta_{\delta\to k}(k)$ such that $\delta_{\delta\to k}(k) = N(d_1(k,\sigma(\delta_{\delta\to k}(k))))$, which is equivalent to write
\begin{equation}\label{eqllog}
l(\delta_{\delta\to k}(k))=-k.
\end{equation}
Also, from the definition of $\hat\sigma(k)$, it holds
\begin{equation}\label{eqCharS1}
\delta_{\delta\to k}(k) = N(d_1(k,\hat\sigma(k))).
\end{equation}
Taking derivatives with respect to $k$ (in the complementary of the zero measure set where these derivatives are not defined) in both \cref{eqllog} and \cref{eqCharS1}, we find $l'(\delta_{\delta\to k}(k))\frac{d\delta_{\delta\to k}}{dk}(k)=-1$ and $\frac{d\delta_{\delta\to k}}{dk}(k) = n(d_1(k))d_1'(k)$, so in particular
$$d_1'(k)l'(\delta_{\delta\to k}(k))=-\frac{1}{n(d_1(k))}.$$
Since $\delta\to\sigma(\delta)$ lives in $\Sigma_{\delta\to k}$, $l$ is increasing so $d_1$ is decreasing and $\sigma(\delta)$ lives also in $\Sigma_{k\to\delta}$. The function $\delta_{\delta\to k}(k)$, in particular, corresponds to the function $\delta_{k\to\delta}(k)$ defined in $\Sigma_{k\to\delta}$.

\end{proof}

	Given the above proof, we will generally denote the functions
\(\delta_{\delta\to k}(k)\) and \(\delta_{k\to\delta}(k)\) with
\(\delta(k)\), with inverse \(k(\delta)\). We have shown that all smiles
in \(\delta\) which can be transformed unambiguously in a smile in \(k\)
are such that their smile in \(k\) satisfies the Fukasawa first
necessary condition of no arbitrage: \(d_1(k,\hat\sigma(k))\) decreasing
and surjective. Furthermore, in the proof we have shown that a smile in
strike transformed into a smile in delta, can be re-transformed into a
smile in strike and such smile necessarily coincides with the initial
one.

	It is trivial that a smile \(\delta\to\sigma(\delta)\) in
\(\Sigma_{k\to\delta}\) has the property
\[\forall k\in\mathbb{R}\, \exists!\delta(k)\, |\, \delta(k) = N\bigl(d_1\bigl(k,\hat\sigma(k)\bigr)\bigr).\]
In the above proof we showed that such smile has also the property
\[\forall k\in\mathbb{R}\, \exists!\delta(k)\, |\, \delta(k) = N\bigl(d_1\bigl(k,\sigma(\delta(k))\bigr)\bigr).\]

Similarly, from the definition of \(\Sigma_{k\to\delta}\), it follows
that its smiles satisfy
\[\forall \delta\in]0,1[\, \exists!k(\delta)\, |\, \delta = N\bigl(d_1\bigl(k(\delta),\hat\sigma(k(\delta))\bigr)\bigr).\]
In particular, by the definition of the smile
\(\delta\to\sigma(\delta)\), it holds
\(\delta = N(d_1(k(\delta),\sigma(\delta)))\). If there is a second
\(\tilde k\) satisfying \(\delta = N(d_1(\tilde k,\sigma(\delta)))\),
then since \(\Sigma_{k\to\delta}=\Sigma_{\delta\to k}\), it must be
\(\delta=\delta(\tilde k)\), so that \(k(\delta)=\tilde k\). We have
proven that smiles in \(\Sigma_{k\to\delta}\) also satisfy the condition
\begin{equation}\label{eqObsS1S2}
\forall \delta\in]0,1[\, \exists!k(\delta)\, |\, \delta = N\bigl(d_1\bigl(k(\delta),\sigma(\delta)\bigr)\bigr).
\end{equation}

	\hypertarget{qualitative-properties-of-the-smile-from-k-to-delta}{%
\section{\texorpdfstring{Qualitative properties of the smile: from \(k\)
to
\(\delta\)}{Qualitative properties of the smile: from k to \textbackslash delta}}\label{qualitative-properties-of-the-smile-from-k-to-delta}}

	In this section we look at the qualitative properties of the smile in
\(\delta\) resulting from the \(\delta\) transformation of a smile in
\(k\) with decreasing \(k\to d_1(k,\hat\sigma(k))\) function.

	We start with an easy consequence to \cref{PropEqualSets}.

\begin{proposition}

For every smile $\sigma(\delta)\in\Sigma_{\delta\to k}$ it holds
\begin{itemize}
    \item $\sigma'(\delta)\geq 0$ iff $\hat\sigma'(k(\delta))\leq0$ and points of minimum (respectively maximum) $\bar\delta$ for $\sigma(\delta)$ are points of minimum (resp.maximum) $k(\bar\delta)$ for $\hat\sigma(k)$;
    \item $\hat\sigma'(k)\geq 0$ iff $\sigma'(\delta(k))\leq 0$  and points of minimum (respectively maximum) $\bar k$ for $\hat\sigma(k)$ are points of minimum (resp.maximum) $\delta(\bar k)$ for $\sigma(\delta)$.
\end{itemize}

\end{proposition}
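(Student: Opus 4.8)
The plan is to exploit the bijection $\delta(k)$ established in \cref{PropEqualSets}, together with the fact that, on $\Sigma_{\delta\to k}=\Sigma_{k\to\delta}$, this map is strictly decreasing. First I would recall from \cref{LemmaSigmaKToDelta} and the discussion after it that, for any smile in this common set, the function $k\to\delta(k)=N(d_1(k,\hat\sigma(k)))$ is a strictly decreasing bijection from $\mathbb R$ onto $]0,1[$, with strictly decreasing inverse $k(\delta)$; moreover by definition $\hat\sigma(k)=\sigma(\delta(k))$, equivalently $\sigma(\delta)=\hat\sigma(k(\delta))$. So the two smiles are obtained from one another by precomposition with a strictly decreasing homeomorphism of the relevant intervals.

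The core of the argument is then the elementary fact that composing with a strictly decreasing bijection reverses the sign of (one-sided) derivatives and swaps the roles of minima and maxima. Concretely, for the first bullet I would write $\sigma(\delta)=\hat\sigma(k(\delta))$ and differentiate (using the left/right derivatives, valid off a measure-zero set since $k(\delta)$ is monotone hence a.s.\ differentiable, as noted after \cref{LemmaSigmaKToDelta}), obtaining $\sigma'(\delta)=\hat\sigma'(k(\delta))\,k'(\delta)$ with $k'(\delta)\le 0$; hence $\sigma'(\delta)\ge 0$ iff $\hat\sigma'(k(\delta))\le 0$. For the extrema statement: if $\bar\delta$ is a local minimum of $\sigma$, then since $k(\cdot)$ is a decreasing homeomorphism it maps a neighborhood of $\bar\delta$ onto a neighborhood of $k(\bar\delta)$, and $\hat\sigma(k)=\sigma(k^{-1}(k))=\sigma(\delta(k))$ takes at $k(\bar\delta)$ the value $\sigma(\bar\delta)$, which is $\le\sigma(\delta)=\hat\sigma(k)$ for all $k$ in that neighborhood; so $k(\bar\delta)$ is a local minimum of $\hat\sigma$ (and symmetrically for maxima, and for the converse direction using that $\delta(\cdot)$ is also a decreasing homeomorphism). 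The second bullet is the mirror image: write $\hat\sigma(k)=\sigma(\delta(k))$, differentiate to get $\hat\sigma'(k)=\sigma'(\delta(k))\,\delta'(k)$ with $\delta'(k)\le 0$, and repeat the extrema argument with the roles of $k$ and $\delta$ interchanged.

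There is no real obstacle here; the only point requiring a little care is that the smiles are only assumed to admit one-sided derivatives, so I would phrase the derivative identities for the left and right derivatives separately (the chain rule for one-sided derivatives through a monotone map is valid), or alternatively argue the sign statements purely through monotonicity of the composition to avoid differentiability bookkeeping altogether — the equivalence ``$\sigma'\ge 0$ on an interval iff $\sigma$ nondecreasing there'' together with ``$\hat\sigma=\sigma\circ\delta$ with $\delta$ decreasing'' already gives the result without ever differentiating. I would present the monotonicity-based version as the main line and mention the differentiated identity $\sigma'(\delta)=\hat\sigma'(k(\delta))k'(\delta)$ as a quick alternative, since it also makes the ``iff'' transparent and connects to the computation $\delta'(k)=n(d_1(k))d_1'(k)$ already appearing in the proof of \cref{PropEqualSets}.
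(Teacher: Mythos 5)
Your proposal is correct and follows essentially the same route as the paper: differentiate the identity $\sigma(\delta)=\hat\sigma(k(\delta))$ via the chain rule to get the sign reversal (the paper writes $k'(\delta)=-l'(\delta)$ with $l$ increasing, which is the same fact you use), and transport local extrema through the continuous monotone bijection $k(\delta)$ by mapping neighborhoods to neighborhoods, with the second bullet handled symmetrically via $\hat\sigma(k)=\sigma(\delta(k))$. Your remark that the sign statement can be obtained purely from monotonicity of the composition, avoiding one-sided-derivative bookkeeping, is a reasonable refinement but does not change the substance of the argument.
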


	\begin{proof}

Since $\hat\sigma(k(\delta))=\sigma(\delta)$ by definition of $\Sigma_{k\to\delta}$, taking derivatives with respect to $\delta$ implies
\begin{align*}
\sigma'(\delta) &= \hat\sigma'(k(\delta))k'(\delta)\\
&= -\hat\sigma'(k(\delta))l'(\delta)
\end{align*}
where we have used $l(\delta)=-k(\delta)$ as in \cref{eqllog}. Smiles in $\Sigma_{\delta\to k}$ have increasing $l(\delta)$, so the sign of $\sigma'(\delta)$ is opposite to the sign of $\hat\sigma'(k(\delta))$. Furthermore, if $\bar\delta$ is a point of minimum for $\sigma(\delta)$, then for every $\delta$ in a neighborhood of $\bar\delta$, it holds $\sigma(\bar\delta)<\sigma(\delta)$. Using the relation $\hat\sigma(k(\delta))=\sigma(\delta)$, it follows $\hat\sigma(k(\bar\delta))<\hat\sigma(k(\delta))$. Since the function $k(\delta)=d_1^{-1}(N^{-1}(\delta))$ is continuous, then for every $k$ in a neighborhood of $k(\bar\delta)$ it holds $\hat\sigma(k(\bar\delta))<\hat\sigma(k)$, so $k(\bar\delta)$ is a point of minimum for $\hat\sigma(k)$. Similarly for points of maximum.

The proof is similar for the second point, using the relation $\hat\sigma(k)=\sigma(\delta(k))$ from the definition of $\Sigma_{\delta\to k}$, and the fact that $\delta'(k)=\frac{1}{k'(\delta(k))}=-\frac{1}{l'(\delta(k))}$.

\end{proof}

	We have already seen in \cref{PropLee} that under
\cref{remarkHypothesis}, the left and right wing Lee conditions hold,
i.e.~that the limits of \(\frac{\hat\sigma(k)^2T}{k}\) at \(\pm\infty\)
are bounded by \(2\). We now look at what these limits correspond in the
\(\delta\) notation.

	\begin{proposition}\label{PropLimsupLiminf}

Let $\sigma(\delta)\in\Sigma_{k\to\delta}$ and $\hat\sigma(k)$ the corresponding smile in strike. The left wing Lee condition $\frac{\hat\sigma(k)^2T}{k}>-2$ for $k\ll0$ holds iff $\frac{\sigma(\delta)\sqrt T}{N^{-1}(\delta)}<1$ for $\delta$ near $1$. The right wing Lee condition $\frac{\hat\sigma(k)^2T}{k}<2$ for $k\gg0$ holds.

Furthermore,
\begin{align*}
\limsup_{k\to-\infty}\frac{\hat\sigma(k)^2T}{k}=a & & \iff & & \liminf_{\delta\to1}\frac{\sigma(\delta)\sqrt T}{N^{-1}(\delta)}=-\frac{2a}{2-a},
\end{align*}
and
\begin{align*}
\limsup_{k\to\infty}\frac{\hat\sigma(k)^2T}{k}=b & & \iff & & \liminf_{\delta\to0}\frac{\sigma(\delta)\sqrt T}{N^{-1}(\delta)}=-\frac{2b}{2-b}.
\end{align*}

\end{proposition}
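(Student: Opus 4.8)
The plan is to transport the statement to the $k$-side through the bijection $k\leftrightarrow\delta$ provided by \cref{PropEqualSets}, after which everything reduces to an elementary computation with the Black--Scholes expression for $d_1$. Write $v(k):=\hat\sigma(k)\sqrt T$ and $r(k):=\frac{\hat\sigma(k)^2T}{k}=\frac{v(k)^2}{k}$. Since $\sigma\in\Sigma_{k\to\delta}=\Sigma_{\delta\to k}$, the map $k\mapsto\delta(k)=N(d_1(k,\hat\sigma(k)))$ is a continuous strictly decreasing bijection from $\mathbb R$ onto $]0,1[$ with inverse $k(\delta)$, so that $\delta\to1\iff k(\delta)\to-\infty$ and $\delta\to0\iff k(\delta)\to+\infty$; moreover $N^{-1}(\delta)=d_1(k(\delta),\hat\sigma(k(\delta)))$ and $\sigma(\delta)=\hat\sigma(k(\delta))$. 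Substituting $d_1=-k/v+v/2=(v^2-2k)/(2v)$ gives the key identity
\[
\frac{\sigma(\delta)\sqrt T}{N^{-1}(\delta)}=\frac{\hat\sigma(k(\delta))\sqrt T}{d_1(k(\delta),\hat\sigma(k(\delta)))}=\frac{2\,v(k(\delta))^2}{v(k(\delta))^2-2k(\delta)}=g\bigl(r(k(\delta))\bigr),\qquad g(r):=\frac{2r}{r-2}.
\]
The facts about $g$ I will use are elementary: $g$ is continuous and strictly decreasing on each of $]-\infty,2[$ and $]2,\infty[$, it is an involution ($g\circ g=\mathrm{id}$), it maps $]-\infty,0]$ onto $[0,2[$ and $]0,2[$ onto $]-\infty,0[$, with $g(r)\to2$ as $r\to-\infty$ and $g(r)\to-\infty$ as $r\to2^-$.

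Next I would dispose of the two ``wing'' assertions, which are pointwise. For $k\gg0$ the function $d_1(k,\hat\sigma(k))$ is negative, because it is strictly decreasing and surjective, hence tends to $-\infty$; as $k>0$, $d_1<0$ reads $v^2<2k$, i.e.\ $r(k)<2$, which is exactly the right wing Lee condition. For $k\ll0$, the inequality $d_1(k,\hat\sigma(k))>\sqrt{-2k}>0$ from Lemma~3.5 of \cite{fukasawa2012normalizing} (recalled in the proof of \cref{PropLee}) shows $N^{-1}(\delta)=d_1(k(\delta))>0$ for $\delta$ near $1$; for such $\delta$, dividing by the positive $d_1$ gives $\frac{\sigma(\delta)\sqrt T}{N^{-1}(\delta)}<1\iff v<d_1\iff v^2<-2k\iff r(k)>-2$, and since ``$\delta$ near $1$'' and ``$k(\delta)$ near $-\infty$'' describe the same eventual behaviour, this is the left wing equivalence.

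Finally, for the $\limsup$/$\liminf$ formulas I would invoke the standard fact that if $h$ is continuous and strictly decreasing on an interval then $\liminf h(r(k))=h\bigl(\limsup r(k)\bigr)$ along any filter (once $r(k)$ eventually lies in the interval). For the left wing, $r(k)<0$ for $k\ll0$, so the values stay in $]-\infty,2[$; applying the fact with $h=g$, the change of variables $\delta\to1\iff k\to-\infty$, and the identity of the first paragraph,
\[
\liminf_{\delta\to1}\frac{\sigma(\delta)\sqrt T}{N^{-1}(\delta)}=\liminf_{k\to-\infty}g(r(k))=g\Bigl(\limsup_{k\to-\infty}r(k)\Bigr)=g(a)=-\frac{2a}{2-a},
\]
and the same computation on the branch $]0,2[$ (where $r(k)$ lies for $k\gg0$) gives $\liminf_{\delta\to0}\frac{\sigma(\delta)\sqrt T}{N^{-1}(\delta)}=g(b)=-\frac{2b}{2-b}$. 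The reverse implications are free from the involution property: $\limsup r=g(\liminf s)$ follows by applying $g$ to $\liminf s=g(\limsup r)$. The degenerate values $a=-\infty$ and $b=2$ are covered by reading the limits in the extended reals, where $g$ extends continuously ($g(-\infty)=2$, $g(2^-)=-\infty$) and $-\frac{2a}{2-a}$ agrees.

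\textbf{Main obstacle.} Nothing here is deep, but two points require care: (i) selecting the correct monotone branch of $g$ near each infinity, which rests on pinning down the sign of $d_1(k,\hat\sigma(k))$ as $k\to\pm\infty$ (this is where membership in $\Sigma_{k\to\delta}$ and \cref{PropLee} enter); and (ii) the clean passage $\limsup r(k)\leftrightarrow\liminf g(r(k))$, in particular its behaviour at the boundary values $a=-\infty$ and $b=2$ where the ratios blow up, which one should either handle with extended-real conventions or isolate in a short remark.
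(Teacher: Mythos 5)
Your proof is correct and follows essentially the same route as the paper: change variables through the monotone bijection $k\leftrightarrow\delta$, rewrite $\sigma(\delta)\sqrt T/N^{-1}(\delta)$ as an explicit fractional-linear function of $\hat\sigma(k)^2T/k$, and transport $\limsup$ to $\liminf$ using monotonicity of that map (the paper phrases this via $l(\delta)=-k(\delta)$ and the intermediate quantity $\bigl(\tfrac12-\tfrac{N^{-1}(\delta)}{\sigma(\delta)\sqrt T}\bigr)^{-1}$, which is the same computation). Your packaging via the involution $g(r)=\tfrac{2r}{r-2}$ and the explicit treatment of branch selection and the boundary values $a=-\infty$, $b=2$ is a welcome tightening but not a different argument.
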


\begin{proof}

Since $\sigma(\delta)\in\Sigma_{k\to\delta}$, there exists a smile in strike $\hat\sigma(k)$ with strictly decreasing surjective function $k\to d_1(k,\hat\sigma(k))$. Thanks to the surjectivity of $d_1$, the right wing Lee condition is satisfied as shown in \cref{PropLee}.

For every $k$ it holds $\frac{\hat\sigma(k)^2T}{k} = \frac{\sigma(\delta(k))^2T}{k(\delta(k))}$, where $\delta(k)=N(d_1(k,\hat\sigma(k))$ and $k(\delta)$ is its inverse. Since $\delta(k)$ is surjective decreasing, it goes to $1$ when $k$ goes to $-\infty$ and to $0$ when $k$ goes to $\infty$. Also, since it is monotone continuous, the left wing Lee condition holds iff $\frac{\sigma(\delta)^2T}{k(\delta)}>-2$ for every $\delta$ near $1$, and it holds
\begin{align*}
\limsup_{k\to-\infty}\frac{\hat\sigma(k)^2T}{k} = \limsup_{\delta\to1}\frac{\sigma(\delta)^2T}{k(\delta)}
\end{align*}
and similarly for $k\to\infty$. From \cref{eqllog}, we can substitute $k(\delta)$ with $-l(\delta)$, which is defined in \cref{eqDefL}, so that we are now studying the quantity $-\frac{\sigma(\delta)\sqrt T}{N^{-1}(\delta)-\frac{\sigma(\delta)\sqrt T}2}$. Since $l(\delta)$ is increasing surjective, the denominator is negative for small $\delta$ and positive for large $\delta$. Then, it is easy to see that the left wing Lee condition holds iff $\frac{\sigma(\delta)\sqrt T}{N^{-1}(\delta)}<1$.

The argument of the limit becomes $\bigl(\frac12-\frac{N^{-1}(\delta)}{\sigma(\delta)\sqrt T}\bigr)^{-1}$, so that the two limits superior are equal to
$$\Bigl(\frac12 - \limsup_{\delta}\frac{N^{-1}(\delta)}{\sigma(\delta)\sqrt T}\Bigr)^{-1}.$$
This quantity is equal to $c$ (either $a$ for $\delta\to1$ or $b$ for $\delta\to0$) iff $\limsup_{\delta}\frac{N^{-1}(\delta)}{\sigma(\delta)\sqrt T}=\frac{c-2}{2c}$ and the conclusion follows. The reasoning still holds for $c=0$.

\end{proof}

	From the above proposition it follows that Lee conditions applied to the
total variance limits (i.e.~\(a\in[-2,0]\) and \(b\in[0,2]\)), translate
in the delta notation into the requirement that the limit at \(0\) of
\(\frac{\sigma(\delta)\sqrt T}{N^{-1}(\delta)}\) is negative while the
limit at \(1\) is positive and smaller than \(1\). The first condition,
i.e.~the right wing Lee condition, is automatically granted by the sign
of the function \(N^{-1}(\delta)\). This is what we expected using
\cref{PropLee} since a smile in delta which can be transformed in a
smile in strike has surjective function \(k\to d_1(k,\hat\sigma(k))\).
The second condition, i.e.~the left wing Lee condition, is not
automatically granted. However, it is granted under
\cref{remarkHypothesis} as shown in \cref{PropLee}.

	\begin{remark}

In section 4.1 of \cite{schluter2009tail}, it is shown that the high Gaussian quantile can be asymptotically wrote as $N^{-1}(\delta) = \sqrt{-2\log(1-\delta)} + \mathcal o(N^{-1}(\delta))$. With a similar reasoning, it is easy to prove that for low Gaussian quantiles it holds $N^{-1}(\delta) = -\sqrt{-2\log(\delta)} + \mathcal o(N^{-1}(\delta))$. Then, the limit inferior in \cref{PropLimsupLiminf} can be substituted with
\begin{align*}
\liminf_{\delta\to1}\frac{\sigma(\delta)\sqrt T}{\sqrt{-2\log(1-\delta)}} & & \text{and} & & \liminf_{\delta\to0}\frac{\sigma(\delta)\sqrt T}{-\sqrt{-2\log(\delta)}}.
\end{align*}

\end{remark}

	We now look at the expansion of a smile in delta around the ATM point
for a given expansion of the corresponding smile in strike.

\begin{proposition}

Let $\sigma(\delta)\in\Sigma_{k\to\delta}$ and $\hat\sigma(k)$ the corresponding smile in strike. If
\begin{equation*}
\hat\sigma(k)\sqrt{T} = a_0 + a_1 k + a_2 k^2 + \mathcal o(k^2)
\end{equation*}
then
\begin{align*}
\sigma(\delta)\sqrt{T} =\,& a_0 - \frac{2a_0 a_1}{n\bigl(\frac{a_0}2\bigr)(2-a_0a_1)} (\delta-\delta(0))\, +\\
&+ \frac{a_0\bigl(8(a_0a_2+2a_1^2) - a_0a_1(2-a_0a_1)^2\bigr)}{n\bigl(\frac{a_0}2\bigr)^2(2-a_0a_1)^3} (\delta-\delta(0))^2\, +\\
&+ \mathcal o((\delta-\delta(0))^2)
\end{align*}
where $\delta(0)=N\bigl(\frac{a_0}{2}\bigr)$ is the delta ATM point.

\end{proposition}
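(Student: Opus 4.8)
The plan is to obtain the $\delta$-expansion by inverting the relations that tie $\sigma(\delta)$, $\hat\sigma(k)$, $\delta(k)$ and $k(\delta)$ together, working everything as formal Taylor series to second order. Concretely, I would start from the two fundamental identities available from \cref{PropEqualSets}: $\hat\sigma(k(\delta))=\sigma(\delta)$ and $l(\delta)=-k(\delta)$, where $l$ is as in \cref{eqDefL}. The strategy is: (i) expand $\delta(k)=N(d_1(k,\hat\sigma(k)))$ to second order in $k$ using the given expansion $\hat\sigma(k)\sqrt T=a_0+a_1k+a_2k^2+\mathcal o(k^2)$; (ii) invert this series to get $k$ as a second-order series in $(\delta-\delta(0))$; (iii) substitute into $\hat\sigma(k)\sqrt T$ to produce $\sigma(\delta)\sqrt T$ as a second-order series in $(\delta-\delta(0))$, and match coefficients.

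For step (i), write $k\mapsto d_1(k,\hat\sigma(k))=-\dfrac{k}{\hat\sigma(k)\sqrt T}+\dfrac{\hat\sigma(k)\sqrt T}{2}$. At $k=0$ this equals $a_0/2$, so $\delta(0)=N(a_0/2)$, matching the claim. Differentiating and using the given $a_i$'s, one computes $d_1'(0)$ and $d_1''(0)$ in terms of $a_0,a_1,a_2$; the leading derivative is $d_1'(0)=-\dfrac{2-a_0a_1}{2a_0}\cdot$(a factor; I'd track it carefully), and then $\delta'(0)=n(a_0/2)\,d_1'(0)$, $\delta''(0)=n'(a_0/2)d_1'(0)^2+n(a_0/2)d_1''(0)$, where $n=N'$ and $n'(x)=-x\,n(x)$. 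This is the routine-but-error-prone part. Alternatively, and more cleanly, I would work with $l(\delta)$ directly: from $l(\delta)=-k$ and $\delta'(k)=\dfrac{1}{l'(\delta(k))}$ (shown in the proof of \cref{PropEqualSets}), together with $l'(\delta)=-\dfrac{1}{n(d_1)d_1'}$ evaluated suitably, one gets $k'(\delta(0))$ and $k''(\delta(0))$ at the ATM point; since $l$ depends on $\delta$ only through $N^{-1}(\delta)$ and $\sigma(\delta)$, and we are simultaneously solving for $\sigma$, I would set up the two series $\sigma(\delta)\sqrt T=a_0+b_1(\delta-\delta(0))+b_2(\delta-\delta(0))^2+\cdots$ and $k(\delta)=c_1(\delta-\delta(0))+c_2(\delta-\delta(0))^2+\cdots$ as unknowns, plug both into $l(\delta)=-k(\delta)$ and into $\hat\sigma(k(\delta))\sqrt T=a_0+a_1 k(\delta)+a_2k(\delta)^2+\cdots=\sigma(\delta)\sqrt T$, and solve the resulting triangular system order by order for $b_1,c_1$ then $b_2,c_2$.

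The main obstacle is purely the bookkeeping of the second-order coefficients: the composition of three nonlinear expansions (the $d_1$ formula, the Gaussian c.d.f.\ $N$ with its quadratic term coming from $n'(x)=-xn(x)$, and the series reversion) produces the somewhat ungainly numerators $8(a_0a_2+2a_1^2)-a_0a_1(2-a_0a_1)^2$ and denominators $n(a_0/2)^2(2-a_0a_1)^3$, and one must be scrupulous about signs (note $k(\delta)=-l(\delta)$ introduces a sign flip) and about the fact that $n'$ contributes at second order only. I would verify the first-order coefficient against \cref{PropLimsupLiminf}'s local statement (the sign of $\sigma'$ is opposite to that of $\hat\sigma'$, consistent with the factor $-2a_0a_1/(n(a_0/2)(2-a_0a_1))$ having sign opposite to $a_1$ since $2-a_0a_1>0$ by the left-wing Lee condition), which is a useful sanity check before grinding out the quadratic term. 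Once the coefficients are matched, the $\mathcal o((\delta-\delta(0))^2)$ remainder follows from continuity of the relevant derivatives, i.e.\ from $\sigma,\hat\sigma\in\mathcal D$ and the smoothness of $N$, $N^{-1}$ away from the boundary.
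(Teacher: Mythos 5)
Your plan is correct and, in its second ("more cleanly") variant, is essentially the paper's own proof: the paper likewise works from $\hat\sigma(k(\delta))=\sigma(\delta)$ and $l(\delta)=-k(\delta)$, differentiates $l$ from its definition, and solves order by order for $l'(\delta(0))$ and $l''(\delta(0))$ (your $-c_1$ and $-2c_2$) using the chain-rule relations for $\sigma'$ and $\sigma''$ at the ATM point $\delta(0)=N(a_0/2)$ where $l(\delta(0))=0$. The only quibble is that $2-a_0a_1>0$ follows from $d_1'(0)\leq0$ (monotonicity of $d_1$) rather than from the left-wing Lee condition, but that is only your sanity check, not a step of the argument.
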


\begin{proof}

The delta ATM point is $\delta(0) = N(d_1(0,\hat\sigma(0)) = N\bigl(\frac{a_0}2\bigr)$. Observe that from \cref{eqllog}, it holds $l(\delta)=-k(\delta)$. We will use the following relations:
\begin{align}
& \sigma(\delta) = \hat\sigma(k(\delta)) \label{eqFromSigmaDeltaToK}\\
& \sigma'(\delta) = -\hat\sigma'(k(\delta))l'(\delta) \label{eqFromSigmaDeltaToKDeriv}\\
& \sigma''(\delta) = \hat\sigma''(k(\delta))l'(\delta)^2 - \hat\sigma'(k(\delta))l''(\delta). \label{eqFromSigmaDeltaToKDeriv2}
\end{align}
The first and second derivatives of $l(\delta)$ can be computed from the definition in \cref{eqDefL} as
\begin{align*}
l'(\delta) &= \biggl(\frac{1}{n(N^{-1}(\delta))}-\frac{\sigma'(\delta)\sqrt T}2\biggr)\sigma(\delta)\sqrt T + l(\delta)\frac{\sigma'(\delta)}{\sigma(\delta)}\\
l''(\delta) &= \biggl(\frac{N^{-1}(\delta)}{n(N^{-1}(\delta))^2}-\frac{\sigma''(\delta)\sqrt T}2\biggr)\sigma(\delta)\sqrt T + \biggl(2l'(\delta)-l(\delta)\frac{\sigma'(\delta)}{\sigma(\delta)}\biggr)\frac{\sigma'(\delta)}{\sigma(\delta)} + l(\delta)\frac{d}{d\delta}\frac{\sigma'(\delta)}{\sigma(\delta)}.
\end{align*}

From \cref{eqFromSigmaDeltaToK}, it follows that the constant coefficient of the expansion of $\sigma(\delta)\sqrt T$ is $a_0$.

Observe that $l(\delta(0))=-k(\delta(0))=0$, so that $l'(\delta(0))$ has only one term. For the first order coefficient, from \cref{eqFromSigmaDeltaToKDeriv} we obtain $\sigma'(\delta(0))\sqrt T = -a_1l'(\delta(0))$. Substituting in the expression for the derivative of $l(\delta)$ and solving for $l'(\delta(0))$, it follows
$$l'(\delta(0)) = \frac{2a_0}{n\bigl(\frac{a_0}2\bigr)(2-a_0a_1)},$$
so that we find the first order coefficient of the expansion of $\sigma(\delta)\sqrt T$.

Finally, the second order coefficient can be found from the expression for the second derivative of $l(\delta)$ evaluated in $\delta(0)$. As in the previous steps, substituting $\sigma''(\delta(0))\sqrt T$ with $a_2l'(\delta(0))^2 - a_1l''(\delta(0))$ from \cref{eqFromSigmaDeltaToKDeriv2} and solving for $l''(\delta(0))$, we find
$$l''(\delta(0)) = \frac{a_0\bigl(a_0(2-a_0a_1)^2-4(a_0^2a_2+4a_1)\bigr)}{n\bigl(\frac{a_0}2\bigr)^2(2-a_0a_1)^3}$$
and from this the expression for the second order coefficient of the expansion of $\sigma(\delta)\sqrt T$.

\end{proof}

	\hypertarget{symmetric-transformation}{%
\section{Symmetric transformation}\label{symmetric-transformation}}

	Consider the smile inversion
\(k\to\hat\sigma(-k):=\hat{\bar\sigma}(k)\). Arbitrage-free smiles are
such that their inverse smile is still arbitrage-free. The
parametrisation of the set \(\Sigma_\text{A}\) is hard. Indeed, twice
derivable functions \(\sigma(\delta)\) belong to \(\Sigma_\text{A}\) iff
they satisfy the delta version of the requirement of positivity of
\cref{eqButtArbitrage}. Observe that such condition can be written only
for the subset of \(\Sigma_\text{A}\) of twice derivable functions since
it involves second derivatives of the smile.

Nevertheless, the subset of \(\Sigma_{\delta\to k}\) which is closed
under symmetry can be parametrized, and this could make a step forward
to the search of arbitrage-free smiles in delta. We define such set
\(\Sigma_\text{WA}\) since it is the set of smiles satisfying the Weak
Arbitrage-free conditions of no arbitrage of monotonicity of \(d_1(k)\)
and \(d_2(k)\) found by Fukasawa (plus the surjectivity of such
functions). The requirement defining \(\Sigma_{k\to\delta}\) is that
there exists a smile \(k\to\hat\sigma(k)\) such that the function
\(d_1(k)=d_1(k,\hat\sigma(k))\) is decreasing and surjective. For the
inverse smile, we are asking that the function
\(\bar d_1(k)=d_1(k,\hat{\bar\sigma}(k))\) is decreasing and surjective.
It is easy to see that \(\bar d_1(k)=-d_2(-k)\).

Taking the function \(\sigma(\delta)=\hat\sigma(k(\delta))\) where
\(k(\delta)=d_1^{-1}(N^{-1}(\delta))\), the requirement that \(d_2(k)\)
is decreasing and surjective corresponds to the requirement that
\(\delta\to d_2(k(\delta))\) is increasing and surjective. It holds
\(d_2(k)=d_1(k)-\hat\sigma(k)\sqrt T\), so in the delta notation the
requirement is that the function
\[m(\delta) = N^{-1}(\delta)-\sigma(\delta)\sqrt{T}\] is increasing and
surjective.

We can then define the subset of \(\Sigma_{k\to\delta}\) closed for
smile inversion as \begin{equation}\label{eqSigmaWA}
\begin{aligned}
\Sigma_\text{WA} := \bigl\{\delta\to\sigma(\delta)\in \mathcal D(]0,1[,\mathbb{R}^+)\, \bigl|\, \exists k\to\hat\sigma(k)\in \mathcal D(\mathbb{R},\mathbb{R}^+)\ \text{s.t.}&\\
k\to d_1(k,\hat\sigma(k))\ \text{strictly decreasing surjective},&\\
k\to d_2(k,\hat\sigma(k))\ \text{strictly decreasing surjective},&\\
\hat\sigma(k(\delta))=\sigma(\delta)&\bigr\}.
\end{aligned}
\end{equation}

It is easy to see that if \(\sigma(\delta)\) belongs to
\(\Sigma_\text{WA}\), then it is possible to define the smile
\(\bar\sigma(\delta):=\hat{\bar\sigma}(\bar k(\delta))\) where
\(\bar k(\delta) = \bar{d}_1^{-1}(N^{-1}(\delta)) = -d_2^{-1}(-N^{-1}(\delta))\),
so that \(\bar\sigma(\delta)=\hat\sigma(d_2^{-1}(-N^{-1}(\delta)))\).
Since \(d_2(k)\) is strictly decreasing, it is a.s. differentiable and
so is its inverse. As a consequence, because \(\hat\sigma(k)\) admits
left/right derivative, also \(\bar\sigma(\delta)\) admits them and it
belongs to \(\Sigma_\text{WA}\).

	In order to parametrize \(\Sigma_\text{WA}\) we look at the problem of
calibration of a smile in \(\delta\).

	\hypertarget{calibration-of-a-smile-in-delta-in-sigma_deltato-k}{%
\subsection{\texorpdfstring{Calibration of a smile in delta in
\(\Sigma_{\delta\to k}\)}{Calibration of a smile in delta in \textbackslash Sigma\_\{\textbackslash delta\textbackslash to k\}}}\label{calibration-of-a-smile-in-delta-in-sigma_deltato-k}}

	We have characterized the set \(\Sigma_{\delta\to k}\) of admissible
functions \(\delta\to\sigma(\delta)\) such that it is possible to
recover a function \(k\to\hat\sigma(k)\) where
\(\hat\sigma(k)=\sigma(\delta(k))\) and
\(\delta(k) = N(d_1(k, \sigma(\delta(k))))\). Also, we have proven that
a function \(\delta\to\sigma(\delta)\) belongs to
\(\Sigma_{\delta\to k}\) iff it is possible to recover a function
\(k\to\hat\sigma(k)\) satisfying the first Fukasawa necessary condition.
As a consequence, if a function \(\delta\to\sigma(\delta)\) belongs to
\(\Sigma_{\delta\to k}\), then the function \(k\to\delta(k)\) is
bijective. The choice of setting
\(\sigma(\delta)=\hat\sigma(k(\delta))\) is in fact a requirement, since
\(\hat\sigma(k(\delta))=\sigma(\delta(k(\delta)))=\sigma(\delta)\), so
that from a smile in delta living in \(\Sigma_{\delta\to k}\), it can be
possible to pass through a smile in \(k\) and to go back to the same
smile in \(\delta\).

	Suppose we want to calibrate a smile in \(\delta\) which can be
converted to a smile in \(k\) and vice-versa. We could go through the
following steps:

\begin{enumerate}
\def\labelenumi{\arabic{enumi}.}
\tightlist
\item
  consider the market discrete pillars \(\{k_i,\sigma_i\}_i\);
\item
  convert them to the pillars \(\{\delta_i,\sigma_i\}_i\) by defining
  \(\delta_i= N(d_1(k_i,\sigma_i))\);
\item
  compute the pillars \(\{\delta_i,l_i\}\) with \(l_i=-k_i\);
\item
  interpolate/extrapolate an increasing and surjective function
  \(\delta\to l(\delta)\), given pillars in point 3.
\end{enumerate}

The last natural point would be to recover a function
\(\delta\to\sigma(\delta)\) by the calibrated function \(l(\delta)\).
However, this reduces to solve the equation
\[\frac{\sigma^2T}{2} - N^{-1}(\delta)\sqrt{T}\sigma + l(\delta) =0,\]
which could have no, one or two solutions. In the following section we
study the problem of existence and uniqueness of the solution
\(\sigma\).

	\hypertarget{conditions-on-l-for-the-existence-of-sigma}{%
\subsubsection{\texorpdfstring{Conditions on \(l\) for the existence of
\(\sigma\)}{Conditions on l for the existence of \textbackslash sigma}}\label{conditions-on-l-for-the-existence-of-sigma}}

	We now look at the conditions on a given increasing and surjective
function \(\delta\to l(\delta)\) in order to have that
\begin{equation}\label{eqSigmaFromL}
\forall \delta\, \exists! \sigma(\delta) \, |\, \frac{\sigma^2T}{2} - N^{-1}(\delta)\sqrt{T}\sigma + l(\delta) =0 \; \text{with} \; \sigma=\sigma(\delta).
\end{equation}

	\begin{proposition}\label{propLUnique}

Let $l:]0,1[\to\mathbb R$ an increasing and surjective function. The equation
\begin{equation}\label{eqLSigma}
\frac{\sigma^2T}{2} - N^{-1}(\delta)\sqrt{T}\sigma + l(\delta) =0
\end{equation}
has at least one solution $\sigma$ for every $\delta\in]0,1[$ iff $l\bigl(\frac12\bigr)<0$ and $l(\delta)\leq\frac{N^{-1}(\delta)^2}{2}$ for all $\delta>\frac12$. The continuous solution $\delta\to\sigma(\delta)$ is
\begin{equation}\label{eqSigmaP}
N^{-1}(\delta) + \sqrt{N^{-1}(\delta)^2-2l(\delta)}
\end{equation}
for $\delta\leq \frac12$ and it could switch to
\begin{equation}\label{eqSigmaM}
N^{-1}(\delta) - \sqrt{N^{-1}(\delta)^2-2l(\delta)}
\end{equation}
at every $\tilde\delta>\frac12$ such that $l(\tilde\delta)=\frac{N^{-1}(\tilde\delta)^2}{2}$.

\end{proposition}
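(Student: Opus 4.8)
The plan is to read \cref{eqLSigma} as a quadratic in the single unknown $u=\sigma\sqrt T$, namely $\tfrac12 u^2 - N^{-1}(\delta)u + l(\delta)=0$. Its discriminant is $N^{-1}(\delta)^2-2l(\delta)$ and its roots are $u_\pm(\delta)=N^{-1}(\delta)\pm\sqrt{N^{-1}(\delta)^2-2l(\delta)}$, which are exactly the expressions \cref{eqSigmaP} and \cref{eqSigmaM}. Since a volatility has to be strictly positive, I would then ask, for each $\delta$, whether one of these roots is real and positive, organising the discussion by the sign of $N^{-1}(\delta)$, i.e.\ by whether $\delta<\tfrac12$, $\delta=\tfrac12$ or $\delta>\tfrac12$, and using that the product of the roots is $2l(\delta)$ while their sum is $2N^{-1}(\delta)$.

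For $\delta\le\tfrac12$ the sum of the roots is $\le0$: if $l(\delta)<0$ the radicand is strictly larger than $N^{-1}(\delta)^2$, hence positive, and $u_+(\delta)>N^{-1}(\delta)+|N^{-1}(\delta)|=0$ while $u_-(\delta)<0$; if $l(\delta)=0$ the roots are $0$ and $2N^{-1}(\delta)\le0$; if $l(\delta)>0$ the real roots, when they exist, have positive product and non-positive sum, so are non-positive. Hence for $\delta\le\tfrac12$ a positive solution exists iff $l(\delta)<0$, and then it equals $u_+(\delta)$. For $\delta>\tfrac12$ the sum is positive: reality forces $l(\delta)\le\tfrac12 N^{-1}(\delta)^2$, and under this constraint $u_+(\delta)>0$ always, while $u_-(\delta)>0$ precisely when $l(\delta)>0$. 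Hence for $\delta>\tfrac12$ a positive solution exists iff $l(\delta)\le\tfrac12 N^{-1}(\delta)^2$. Putting the two ranges together, \cref{eqLSigma} has a positive solution for every $\delta\in]0,1[$ iff $l(\delta)<0$ on $]0,\tfrac12]$ and $l(\delta)\le\tfrac12 N^{-1}(\delta)^2$ on $]\tfrac12,1[$; since $l$ is increasing, the first of these is equivalent to $l(\tfrac12)<0$, which is the stated condition.

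For the description of the continuous solution I would first note that an increasing surjection $]0,1[\to\mathbb R$ is necessarily continuous (a jump would leave out an interval of values), so $l$ is continuous and therefore so are $u_+$ and $u_-$ where defined — the radicand is $>N^{-1}(\delta)^2\ge0$ on $]0,\tfrac12]$ and $\ge0$ on $]\tfrac12,1[$ by the existence condition just proved. On $]0,\tfrac12]$ the positive root is unique and equals $u_+$, so every continuous solution agrees with $u_+$ there; in particular at $\delta=\tfrac12$ it equals $u_+(\tfrac12)=\sqrt{-2l(\tfrac12)}>0$, the root $u_-(\tfrac12)$ being negative. For $\delta>\tfrac12$ a continuous solution takes at each point one of the values $u_+(\delta),u_-(\delta)$, and by the intermediate value theorem it can pass from the $u_+$ branch to the $u_-$ branch only at a point $\tilde\delta$ where $u_+(\tilde\delta)=u_-(\tilde\delta)$, i.e.\ where the discriminant vanishes, i.e.\ where $l(\tilde\delta)=\tfrac12 N^{-1}(\tilde\delta)^2$; conversely at any such point the two branches meet, so switching in either direction keeps the function continuous. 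This is exactly the claimed form of $\sigma(\delta)$.

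The computation is elementary; the only points I would be careful about are the boundary case $l(\delta)=0$, where the root $u=0$ must be discarded because $\sigma$ is required to be strictly positive — this is also what forces the threshold to be $l(\tfrac12)<0$ rather than $\le0$ — and the small remark that it is surjectivity of the increasing function $l$ that makes it continuous, and hence makes the radical branches $u_\pm$ continuous, on which the continuity statement rests.
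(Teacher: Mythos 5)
The proposal is correct and takes essentially the same approach as the paper: read \cref{eqLSigma} as a quadratic in $\sigma\sqrt{T}$, analyse the positivity of the two roots $N^{-1}(\delta)\pm\sqrt{N^{-1}(\delta)^2-2l(\delta)}$ according to the sign of $N^{-1}(\delta)$ (hence according to $\delta\lessgtr\tfrac12$), deduce the existence criterion from the monotonicity of $l$, and use continuity to show a branch switch can only occur where the discriminant vanishes. You are in fact slightly more careful than the paper on a few minor points (the boundary case $l(\delta)=0$, and the observation that surjectivity of the increasing $l$ forces its continuity), but the argument is the same.
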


	\begin{proof}

For a fixed $\delta\in]0,1[$, the two admissible sigma solutions to \cref{eqLSigma} are
\begin{equation*}
\sigma_\pm(\delta) = \frac{N^{-1}(\delta)}{\sqrt T} \pm \frac{\sqrt{N^{-1}(\delta)^2-2l(\delta)}}{\sqrt{T}}.
\end{equation*}
For the existence, the delta of the equation must be non-negative and at least one of the two solutions must be positive, so
\begin{align*}
&N^{-1}(\delta)^2-2l(\delta)\geq0,\\
&N^{-1}(\delta) \pm \sqrt{N^{-1}(\delta)^2-2l(\delta)}>0,
\end{align*}
where the sign depends on the chosen solution.

Firstly, when $\delta\leq\frac12$, the quantity $N^{-1}(\delta)$ is non-positive and the $-$ solution is negative, so it can be discarded. Instead, the $+$ solution is well-defined and positive iff $l(\delta)<0$. Since $l$ is increasing, the latter condition is equivalent to $l\bigl(\frac12\bigr)<0.$

When $\delta>\frac12$, then $N^{-1}(\delta)$ is positive and both the solutions could be valid. Under the requirement $l(\delta)\leq\frac{N^{-1}(\delta)^2}{2}$, the $+$ solution is always positive while the $-$ solution becomes positive when $l$ becomes positive, and it will stay positive since $l$ is increasing.

The possibility to pass from the $+$ to the $-$ solution at a point $\tilde\delta$ must be guaranteed by continuity of the volatility on $\tilde\delta$. Then, it should hold $\sigma_+(\tilde\delta)=\sigma_-(\tilde\delta)$, or $2l(\tilde\delta)=N^{-1}(\tilde\delta)^2$ and $\sigma(\tilde\delta)\sqrt{T}=N^{-1}(\tilde\delta)$. Rewriting equation \cref{eqLSigma} as
$$l(\delta)=\Bigl(N^{-1}(\delta)-\frac12 \sigma(\delta) \sqrt{T}\Bigr)\sigma(\delta) \sqrt{T}$$
and evaluating in $\tilde\delta$, one finds $l(\tilde\delta)=\frac{\sigma(\tilde\delta)^2T}2$. If there exists such a point $\tilde\delta$, then either the solution keeps being $\sigma_+$ after $\tilde\delta$, or it switches to $\sigma_-$. If there is no point $\tilde\delta$, the solution remains $\sigma_+$.

\end{proof}

	As an immediate consequence of \cref{propLUnique}, under the
requirements \(l\bigl(\frac12\bigr)<0\) and
\(l(\delta)\leq\frac{N^{-1}(\delta)^2}{2}\), if there are no
\(\tilde\delta\) such that
\(l(\tilde\delta)=\frac{N^{-1}(\tilde\delta)^2}{2}\), the \(\sigma\)
solution is unique and it coincides with \cref{eqSigmaP}. If there are
one or more points \(\tilde\delta\), the uniqueness is not guaranteed
since the \(\sigma\) solution could either switch between
\cref{eqSigmaP} and \cref{eqSigmaM} or not do the switch. For the
uniqueness of the solution, more requirements on the solution itself are
needed. We show in the next section that if we require the solution
\(\sigma\) to lie in \(\Sigma_{\text{WA}}\), then the uniqueness is
satisfied.

	\hypertarget{parametrization-of-sigma_textwa}{%
\subsection{\texorpdfstring{Parametrization of
\(\Sigma_{\text{WA}}\)}{Parametrization of \textbackslash Sigma\_\{\textbackslash text\{WA\}\}}}\label{parametrization-of-sigma_textwa}}

	Arbitrage-free smiles in log-forward moneyness satisfy some necessary
conditions. We have already seen the Fukasawa first necessary condition
proven in \cite{fukasawa2012normalizing}, Theorem 2.8 which states that
the function \(k\to d_1(k,\hat\sigma(k))\) is decreasing. Under
hypothesis in \cref{remarkHypothesis}, \(d_1(k)\) is also surjective.
This implies that \(d_1(k)\) must be negative for \(k\) large enough,
and this in turn gives the right wing Lee condition
\(\hat\sigma(k)\sqrt{T}<\sqrt{2k}\) for \(k\gg0\) (\cite{lee2004moment},
Lemma 3.1).

In the same theorem, Fukasawa proves that also the function
\(k\to d_2(k,\hat\sigma(k))\) is decreasing. Again,
\cref{remarkHypothesis} assumes also its surjectivity. This implies that
\(d_2(k)\) must be positive for \(k\) small enough and that the left
wing Lee condition holds: \(\hat\sigma(k)\sqrt{T}<\sqrt{-2k}\) for
\(k\ll0\) (\cite{lee2004moment}, Lemma 3.3). Lee shows that the latter
condition holds for every arbitrage-free smile iff
\(P(S_T=0)<\frac{1}2\), which is indeed our case since we suppose no
mass in \(0\).

	We show in \cref{lemmaS1LSolSigma} that smiles satisfying the former
Fukasawa necessary condition guarantee the existence of a \(\sigma\)
solution to \cref{eqLSigma} and in \cref{lemmaT1SigmaSol} that smiles
satisfying also the latter Fukasawa necessary condition guarantee the
existence and uniqueness of a \(\sigma\) solution.

	\begin{lemma}\label{lemmaS1LSolSigma}

For every $\delta\to\sigma(\delta)\in \Sigma_{\delta\to k}$, the function $l$ satisfies $l\bigl(\frac12\bigr)<0$ and $l(\delta)\leq\frac{N^{-1}(\delta)^2}{2}$ for all $\delta>\frac12$.

\end{lemma}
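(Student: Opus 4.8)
The plan is to exploit the identity $l(\delta) = -k(\delta)$ valid for smiles in $\Sigma_{\delta\to k}$, established in \cref{eqllog}, together with the definition of $l$ in \cref{eqDefL}. First I would evaluate $l$ at $\delta = \tfrac12$: since $N^{-1}(\tfrac12) = 0$, the definition \cref{eqDefL} collapses to $l(\tfrac12) = -\tfrac12 \sigma(\tfrac12)^2 T$, which is strictly negative because $\sigma$ takes values in $\mathbb R^+$. This immediately gives the first inequality $l(\tfrac12)<0$, with essentially no work.

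For the second inequality, I would use the reformulation of the defining relation of $\Sigma_{\delta\to k}$: for each $k$ there is a unique $\delta(k)$ with $\delta(k) = N(d_1(k,\hat\sigma(k)))$, equivalently $N^{-1}(\delta(k)) = d_1(k,\hat\sigma(k)) = -k/(\hat\sigma(k)\sqrt T) + \hat\sigma(k)\sqrt T/2$. The point is that $N^{-1}(\delta)$ \emph{is} the $d_1$ value at the corresponding $k$, and therefore, using $k = -l(\delta)$ and $\hat\sigma(k) = \sigma(\delta)$, one can write $N^{-1}(\delta) = l(\delta)/(\sigma(\delta)\sqrt T) + \sigma(\delta)\sqrt T/2$. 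Squaring this (or completing the square directly from \cref{eqDefL}) yields
\begin{equation*}
\frac{N^{-1}(\delta)^2}{2} - l(\delta) = \frac{1}{2}\Bigl(\frac{l(\delta)}{\sigma(\delta)\sqrt T} - \frac{\sigma(\delta)\sqrt T}{2}\Bigr)^2 = \frac12\, d_2(k(\delta),\hat\sigma(k(\delta)))^2 \geq 0,
\end{equation*}
which is exactly the claimed inequality $l(\delta) \leq N^{-1}(\delta)^2/2$, and it holds for \emph{all} $\delta \in ]0,1[$, not merely $\delta > \tfrac12$. Here I am using that $d_1 - \hat\sigma\sqrt T = d_2$, so the algebraic identity $N^{-1}(\delta)^2 - 2l(\delta) = (d_2(k(\delta)))^2$ is just the discriminant of the quadratic in $\sigma$ from \cref{eqLSigma} rewritten in terms of $d_1$ and $d_2$.

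The key steps, in order, are: (i) observe $l(\delta) = -k(\delta)$ from \cref{eqllog} and that $N^{-1}(\delta(k)) = d_1(k,\hat\sigma(k))$ for smiles in $\Sigma_{\delta\to k}$; (ii) specialize to $\delta = \tfrac12$ to get $l(\tfrac12) = -\tfrac12\sigma(\tfrac12)^2 T < 0$; (iii) complete the square in \cref{eqDefL} to obtain $N^{-1}(\delta)^2 - 2l(\delta) = \bigl(N^{-1}(\delta) - \sigma(\delta)\sqrt T\bigr)^2 = d_2(k(\delta))^2 \geq 0$. There is no real obstacle here — the statement is essentially a bookkeeping consequence of the quadratic structure of \cref{eqLSigma} and the fact that a smile in $\Sigma_{\delta\to k}$ genuinely comes from a $d_1$-value; the only mild care needed is to track which quantity $N^{-1}(\delta)$ represents and to note that the discriminant is a perfect square, hence automatically nonnegative, so the restriction $\delta > \tfrac12$ in the statement is not even needed for this half (it matters only in \cref{propLUnique} for deciding which root is positive).
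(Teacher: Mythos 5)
Your proof is correct and coincides with the paper's own ``by hand'' argument: evaluating $l$ at $\tfrac12$ using $N^{-1}(\tfrac12)=0$, and completing the square to get $N^{-1}(\delta)^2-2l(\delta)=\bigl(N^{-1}(\delta)-\sigma(\delta)\sqrt{T}\bigr)^2\geq0$ (your extra identification of this square with $d_2(k(\delta))^2$ is a nice but inessential remark). The paper additionally notes a one-line alternative --- since $\sigma(\delta)$ solves \cref{eqLSigma} for every $\delta$ by \cref{eqObsS1S2}, the conditions on $l$ follow from \cref{propLUnique} --- but your direct computation is exactly its second proof.
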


	\begin{proof}

As seen in \cref{eqObsS1S2}, for any $\delta$ there is a (unique) $k(\delta)$ such that $\frac{\sigma(\delta)^2T}{2} - N^{-1}(\delta)\sqrt{T}\sigma(\delta) - k(\delta) =0$. Then, $\sigma=\sigma(\delta)$ is a solution of \cref{eqLSigma} and the conclusion follows from \cref{propLUnique}.

The statement can also be proven by hand. Indeed, $l\bigl(\frac12\bigr)$ is equal to $-\frac{\sigma(\frac12)^2T}{2}$, which is always negative, and the second condition reads
$$\Bigl(N^{-1}(\delta)-\frac12 \sigma(\delta) \sqrt{T}\Bigr)\sigma(\delta) \sqrt{T} \leq \frac{N^{-1}(\delta)^2}{2},$$
which simplifying becomes $(N^{-1}(\delta)-\sigma(\delta)\sqrt{T})^2\geq0$, which is always verified.

\end{proof}

	In the case of \(\Sigma_\text{WA}\), \cref{lemmaS1LSolSigma} can be
further developed showing that both the existence and the uniqueness of
a solution \(\sigma\) to \cref{eqLSigma} hold.

	\begin{lemma}\label{lemmaT1SigmaSol}

For every $\delta\to\sigma(\delta)\in \Sigma_\text{WA}$, the function $l$ satisfies $l\bigl(\frac12\bigr)<0$, $l(\delta)\leq\frac{N^{-1}(\delta)^2}{2}$ for all $\delta>\frac12$ and there exists a unique $\tilde\delta$ such that $l(\tilde\delta)=\frac{N^{-1}(\tilde\delta)^2}{2}$. Furthermore
\begin{equation}\label{eqSigmaT1}
\sigma(\delta)\sqrt{T} =
\begin{cases}
N^{-1}(\delta) + \sqrt{N^{-1}(\delta)^2-2l(\delta)} & \text{if $\delta\leq\tilde\delta$},\\
N^{-1}(\delta) - \sqrt{N^{-1}(\delta)^2-2l(\delta)} & \text{if $\delta>\tilde\delta$.}\\
\end{cases}
\end{equation}

\end{lemma}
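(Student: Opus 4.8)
The plan is to leverage the already-established equivalence $\Sigma_\text{WA}\subset\Sigma_{k\to\delta}=\Sigma_{\delta\to k}$ together with Lemma~\ref{lemmaS1LSolSigma}, and then extract the extra information coming from the second Fukasawa condition (monotonicity and surjectivity of $k\to d_2(k,\hat\sigma(k))$). Since $\sigma(\delta)\in\Sigma_\text{WA}\subset\Sigma_{\delta\to k}$, Lemma~\ref{lemmaS1LSolSigma} immediately gives $l(\tfrac12)<0$ and $l(\delta)\leq\frac{N^{-1}(\delta)^2}{2}$ for $\delta>\tfrac12$; moreover, as shown in that proof, equality $l(\delta)=\frac{N^{-1}(\delta)^2}{2}$ holds exactly when $N^{-1}(\delta)=\sigma(\delta)\sqrt T$, i.e.\ when $m(\delta):=N^{-1}(\delta)-\sigma(\delta)\sqrt T$ vanishes. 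So the existence-and-uniqueness claim about $\tilde\delta$ is precisely the claim that $m$ has a unique zero on $]0,1[$.

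Next I would invoke the characterization of $\Sigma_\text{WA}$ recalled just before \eqref{eqSigmaWA}: the requirement that $d_2(k)$ be strictly decreasing and surjective is equivalent, after the change of variables $k=k(\delta)$, to the statement that $\delta\to d_2(k(\delta))=m(\delta)$ is strictly increasing and surjective onto $\mathbb R$. A strictly increasing continuous (indeed, the relevant functions are monotone hence a.s.\ differentiable, and continuous since $\sigma,N^{-1}$ are) surjection $m:]0,1[\to\mathbb R$ takes every real value exactly once, in particular the value $0$. This yields the unique $\tilde\delta$ with $m(\tilde\delta)=0$, equivalently $l(\tilde\delta)=\frac{N^{-1}(\tilde\delta)^2}{2}$, and also tells us that $m(\delta)<0$ for $\delta<\tilde\delta$ and $m(\delta)>0$ for $\delta>\tilde\delta$.

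Finally I would pin down the sign choice in the quadratic formula. From Proposition~\ref{propLUnique} the continuous solution starts as $\sigma_+$ on $\delta\leq\tfrac12$ and may switch to $\sigma_-$ only at a point where $l=\frac{N^{-1}{}^2}{2}$, i.e.\ only at $\tilde\delta$. The discriminant satisfies $N^{-1}(\delta)^2-2l(\delta)=m(\delta)^2$ (this is exactly the algebraic simplification $(N^{-1}(\delta)-\sigma(\delta)\sqrt T)^2\geq0$ from the proof of Lemma~\ref{lemmaS1LSolSigma}), so $\sqrt{N^{-1}(\delta)^2-2l(\delta)}=|m(\delta)|$. Hence $\sigma(\delta)\sqrt T = N^{-1}(\delta)-m(\delta)$ always, which reads as $N^{-1}(\delta)+|m(\delta)|$ when $m(\delta)\leq0$ (i.e.\ $\delta\leq\tilde\delta$, matching \eqref{eqSigmaP}) and as $N^{-1}(\delta)-|m(\delta)|$ when $m(\delta)>0$ (i.e.\ $\delta>\tilde\delta$, matching \eqref{eqSigmaM}); this is exactly \eqref{eqSigmaT1}, and it confirms that the switch does occur at $\tilde\delta$ rather than being optional, because $\sigma(\delta)\sqrt T=N^{-1}(\delta)-m(\delta)$ is forced by $m(\delta)=N^{-1}(\delta)-\sigma(\delta)\sqrt T$. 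The main subtlety to handle carefully is the translation of "$d_2(k)$ strictly decreasing and surjective" into "$m$ strictly increasing and surjective onto all of $\mathbb R$'' — one must check that $k(\delta)$ is an increasing bijection of $]0,1[$ onto $\mathbb R$ (it is, being the inverse of the decreasing surjection $\delta(k)=N(d_1(k))$, as established after Lemma~\ref{LemmaSigmaKToDelta}) so that the monotonicity direction flips correctly and the full range $\mathbb R$ is preserved; everything else is the bookkeeping above.
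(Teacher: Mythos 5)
Your proof is correct and reaches the same conclusion, but the second half takes a genuinely different, and in fact cleaner, route than the paper. The paper derives the sign switch by invoking the left-wing Lee condition (which holds because $d_2$ is decreasing and surjective), showing that $\sigma_+$ violates it for $\delta$ close to $1$ while $\sigma_-$ always satisfies it, and then proves uniqueness of $\tilde\delta$ separately via the injectivity of $d_2$. You instead observe the algebraic identity $N^{-1}(\delta)^2-2l(\delta)=m(\delta)^2$, so that $\sqrt{N^{-1}(\delta)^2-2l(\delta)}=|m(\delta)|$ and the tautology $\sigma(\delta)\sqrt T=N^{-1}(\delta)-m(\delta)$ forces the $\pm$ sign to be the sign of $-m(\delta)$; combined with the fact that $m$ is a continuous strictly increasing surjection onto $\mathbb R$ (the delta-space translation of the second Fukasawa condition, stated in the paper just before \cref{eqSigmaWA}), this yields existence and uniqueness of the zero $\tilde\delta$ of $m$ and the formula \cref{eqSigmaT1} in one stroke, with no case analysis on Lee bounds. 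Your route also makes the \emph{existence} of $\tilde\delta$ more transparent than the paper's, where it is somewhat implicit in the ``must switch somewhere'' reasoning.

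One slip in the step you yourself flag as the main subtlety: $k(\delta)=d_1^{-1}(N^{-1}(\delta))$ is a \emph{decreasing} bijection of $]0,1[$ onto $\mathbb R$, not an increasing one --- it is the inverse of the decreasing surjection $\delta(k)=N(d_1(k))$, and the inverse of a decreasing bijection is decreasing. The conclusion you need is unaffected: $m(\delta)=d_2(k(\delta))$ is the composition of the decreasing $d_2$ with the decreasing $k(\delta)$, hence increasing, and surjectivity is preserved. But as written, ``$k(\delta)$ increasing'' would make $m$ decreasing and break the argument, so the direction should be corrected.
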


	\begin{proof}

Since $\Sigma_\text{WA}$ is a subset of $\Sigma_{\delta\to k}$, the conditions $l\bigl(\frac12\bigr)<0$ and $l(\delta)\leq\frac{N^{-1}(\delta)^2}{2}$ for all $\delta>\frac12$ are satisfied by \cref{lemmaS1LSolSigma}.

Since $d_2(k)$ is a decreasing and surjective function, the left wing Lee condition $\hat\sigma(k)\sqrt{T}<\sqrt{-2k}$ for $k\ll0$ holds. Since $l(\delta(k))=-k$, the condition becomes $\sigma(\delta(k))\sqrt{T}<\sqrt{2l(\delta(k))}$ for $k$ negative enough. Given the monotonicity of the function $\delta(k)$, this is equivalent to requiring $\sigma(\delta)\sqrt{T}<\sqrt{2l(\delta)}$ for $\delta$ large enough, in particular $\delta\gg\frac12$. The $\sigma$ solution of \cref{eqLSigma} can be either of the form $\sigma_+$ \cref{eqSigmaP} or $\sigma_-$ \cref{eqSigmaM}. Computing the square of $\sigma_\pm$, the Lee condition is
$$N^{-1}(\delta)\bigl(N^{-1}(\delta)\pm\sqrt{N^{-1}(\delta)^2-2l(\delta)}\bigr)<2l(\delta).$$
Since $\delta>\frac12$, the quantity $N^{-1}(\delta)$ is positive and dividing we get
$$\pm\sqrt{N^{-1}(\delta)^2-2l(\delta)}<\frac{2l(\delta)}{N^{-1}(\delta)}-N^{-1}(\delta).$$
The right hand side is positive iff $2l(\delta)>N^{-1}(\delta)^2$, which cannot hold. Then the $\sigma_+$ solution does not satisfy the left wing Lee condition for small $k$. On the other hand, the $\sigma_-$ solution always satisfies it since the above inequality holds true iff $2l(\delta)<N^{-1}(\delta)^2$.

Then, the $\sigma$ solution is equal to $\sigma_+$ for $\delta$ smaller than a certain $\tilde\delta$ such that $l(\tilde\delta)=\frac{N^{-1}(\tilde\delta)^2}{2}$ and then it switches to $\sigma_-$. The uniqueness of such point $\tilde\delta$ follows from the monotonicity of $d_2(k)$. Indeed, if two points $\tilde\delta$ and $\hat\delta$ satisfy $l(\delta)=\frac{N^{-1}(\delta)^2}{2}$, then they also satisfy $\sigma(\delta)\sqrt{T}= N^{-1}(\delta)$. In the log-forward moneyness notation, there exist $\tilde k=k(\tilde\delta)$ and $\hat k=k(\hat\delta)$ which satisfy $\hat\sigma(k)\sqrt{T} = d_1(k)$, or $d_2(k)=0$. Since $d_2$ is one-to-one, $\tilde k=\hat k$ and $\tilde\delta=\hat\delta$.

\end{proof}

	Thanks to the above result, it is possible to parametrize the smile
\(\sigma\) living in \(\Sigma_\text{WA}\) using a parametrization of the
function \(l\), which has to be increasing and surjective and has to
satisfy the conditions of existence and uniqueness of
\cref{lemmaT1SigmaSol}.

We finally state how to parameterize the set \(\Sigma_\text{WA}\). We
show the result in case of a.s. differentiable implied volatilities.

	\begin{theorem}[Parametrization of weak arbitrage-free smiles]\label{theoParamT1}

Any a.s. differentiable $\sigma(\delta)\in \Sigma_\text{WA}$ can be parameterized as
\begin{equation}\label{eqSigmaT1}
\sigma(\delta)\sqrt{T} =
\begin{cases}
N^{-1}(\delta) + \sqrt{N^{-1}(\delta)^2 + 2\bigl(\int_\delta^{\frac12}\lambda(x)\,dx + \int_{\frac12}^{\tilde\delta}\mu(x)\,dx\bigr)} & \text{if $\delta\leq\frac12$},\\
N^{-1}(\delta) + \sqrt{2\int_\delta^{\tilde\delta}\mu(x)\,dx} & \text{if $\frac12<\delta\leq\tilde\delta$},\\
N^{-1}(\delta) - \sqrt{2\int_{\tilde\delta}^\delta\frac{N^{-1}(x)}{n(N^{-1}(x))}\alpha(x)\,dx} & \text{if $\delta>\tilde\delta$}.\\
\end{cases}
\end{equation}
where
* $\tilde\delta\in\bigl]\frac12,1\bigr[$;
* $\lambda$ is an a.s. continuous positive function defined on $\bigl]0,\frac12\bigr]$ such that $\int_0^{\frac12}\lambda(x)\,dx = \infty$;
* $\mu$ is an a.s. continuous positive function defined on $\bigl[\frac12,\tilde\delta\bigr[$;
* $\alpha$ is a. as. continuous function defined on $]\tilde\delta,1[$ such that $\alpha(\delta)\in]0,1[$ a.s., $\int_{\tilde\delta}^{1}\frac{N^{-1}(x)}{n(N^{-1}(x))}\alpha(x)\,dx = \infty$, and $\lim_{\delta\to1^-}\bigl(\frac{N^{-1}(\delta)^2}2 - \int_{\tilde\delta}^{\delta}\frac{N^{-1}(x)}{n(N^{-1}(x))}\alpha(x)\,dx\bigr) = \infty$.

\end{theorem}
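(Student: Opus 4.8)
The plan is to set up a bijection between $\Sigma_{\text{WA}}$ (restricted to a.s. differentiable smiles) and the space of parameters $(\tilde\delta,\lambda,\mu,\alpha)$ described in the statement, and then verify that \cref{eqSigmaT1} realizes this bijection. By \cref{lemmaT1SigmaSol}, any $\sigma(\delta)\in\Sigma_{\text{WA}}$ determines a unique $\tilde\delta\in\bigl]\frac12,1\bigr[$ with $l(\tilde\delta)=\frac{N^{-1}(\tilde\delta)^2}{2}$, together with the piecewise formula $\sigma(\delta)\sqrt T = N^{-1}(\delta)\pm\sqrt{N^{-1}(\delta)^2-2l(\delta)}$ (plus for $\delta\le\tilde\delta$, minus for $\delta>\tilde\delta$). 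So the content is really a parametrization of the admissible functions $l$. The idea is to recover $l$ from its derivative on each of the three pieces: on $\bigl]0,\frac12\bigr]$ the increase of $l$ is encoded by a positive density $\lambda$, on $\bigl[\frac12,\tilde\delta\bigr[$ by a positive density $\mu$, and on $]\tilde\delta,1[$ one reparametrizes the derivative of $\frac{N^{-1}(\delta)^2}{2}-l(\delta)$ by writing it as $\frac{N^{-1}(\delta)}{n(N^{-1}(\delta))}\alpha(\delta)$, which is the natural choice since $\frac{d}{d\delta}\frac{N^{-1}(\delta)^2}{2}=\frac{N^{-1}(\delta)}{n(N^{-1}(\delta))}$, so $\alpha$ measures what fraction of that derivative is ``used''.

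First I would fix $\sigma\in\Sigma_{\text{WA}}$, invoke \cref{lemmaT1SigmaSol} to get $\tilde\delta$ and the expression of $\sigma$ in terms of $l$, and then define $\lambda:=-l'$ on $\bigl]0,\frac12\bigr]$, $\mu:=-l'$ on $\bigl[\frac12,\tilde\delta\bigr[$, and $\alpha:=\bigl(\frac{N^{-1}(\delta)}{n(N^{-1}(\delta))}\bigr)^{-1}\bigl(\frac{N^{-1}(\delta)}{n(N^{-1}(\delta))}-l'(\delta)\bigr)$ on $]\tilde\delta,1[$; here $l'$ is the a.s.\ derivative of $l$, which exists because $l$ is monotone and because $\sigma$ is assumed a.s.\ differentiable (so $l$ inherits a left/right derivative from \cref{eqDefL}). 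Then I would check the three normalization/boundary conditions: $\int_0^{1/2}\lambda=\infty$ and the left-wing surjectivity come from $l(0^+)=-\infty$ (equivalently $\delta(-\infty)=1$ via \cref{eqllog}); the bound $l(\delta)\le\frac{N^{-1}(\delta)^2}{2}$ on $\bigl]\frac12,1\bigr[$ gives $\alpha>0$, while positivity of $\sigma$ forces $\alpha<1$ on that range (this is where $\sigma_-(\delta)\sqrt T = N^{-1}(\delta)-\sqrt{N^{-1}(\delta)^2-2l(\delta)}>0$ is used: it rearranges to $2l(\delta)>0$, i.e.\ $l>0$, combined with $l<\frac{N^{-1}(\delta)^2}2$, which is exactly $\alpha\in]0,1[$ once one tracks signs carefully near $\tilde\delta$ where $l$ may still be negative — actually $\alpha\in]0,1[$ should be read as: the integrand stays in the right regime so that $\sigma>0$); and $l(1^-)=-\infty$ gives the last limit condition on $\alpha$ as well as $\int_{\tilde\delta}^1\frac{N^{-1}}{n(N^{-1})}\alpha=\infty$. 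Conversely, given $(\tilde\delta,\lambda,\mu,\alpha)$ with the stated properties, I would define $l$ by integrating these densities (so that $l(\tilde\delta)=\frac{N^{-1}(\tilde\delta)^2}{2}$ and $l$ is continuous and strictly increasing), verify $l$ is surjective onto $\mathbb R$ using the two ``$=\infty$'' conditions, define $\sigma$ by \cref{eqSigmaT1}, check $\sigma>0$ and a.s.\ differentiable, and then show $k\mapsto d_1(k,\hat\sigma(k))$ and $k\mapsto d_2(k,\hat\sigma(k))$ are strictly decreasing and surjective — the $d_1$ part follows from \cref{propLUnique}/$\Sigma_{\delta\to k}$ since $l$ is increasing surjective, and the $d_2$ part follows by checking $m(\delta)=N^{-1}(\delta)-\sigma(\delta)\sqrt T$ is increasing and surjective, which on the last piece reduces to monotonicity of $-\sqrt{2\int_{\tilde\delta}^\delta\cdots}$ shifted, i.e.\ is automatic, and on the first two pieces to a short computation.

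The main obstacle I expect is the bookkeeping around $\tilde\delta$ and the sign analysis that pins down $\alpha\in]0,1[$ together with the three asymptotic conditions: one has to be careful that on $\bigl]\frac12,\tilde\delta\bigr[$ the ``$+$'' branch is forced (so the density there is just $\mu=-l'$ with no $\alpha$-type constraint beyond positivity of $l'$, because $\sigma_+>0$ automatically for $\delta>\frac12$), while on $]\tilde\delta,1[$ the ``$-$'' branch is forced by the left-wing Lee condition (\cref{lemmaT1SigmaSol}), and there the positivity of $\sigma$ is a genuine constraint translating into $0<\frac{N^{-1}(\delta)^2}2 - \int_{\tilde\delta}^\delta\frac{N^{-1}(x)}{n(N^{-1}(x))}\alpha(x)\,dx$, i.e.\ $\alpha<1$ in an integrated sense, plus the boundary behavior as $\delta\to1^-$. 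Getting all of $d_1$, $d_2$ decreasing and surjective, $\sigma>0$, and the correct image to come out precisely equivalent to ``$\lambda,\mu>0$, $\alpha\in]0,1[$, and the three integrals/limits'' — with no slack — is the delicate part; the rest is integration of monotone functions and bijection chasing via \cref{PropEqualSets,lemmaT1SigmaSol}.
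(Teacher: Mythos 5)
Your overall architecture matches the paper's: invoke \cref{lemmaT1SigmaSol} to obtain $\tilde\delta$ and the two-branch expression of $\sigma$ in terms of $l$, parametrize $l$ through its a.s.\ derivative on the three pieces, and run the converse by integrating the densities back. However, there is a concrete error in the middle piece that reveals a missing idea. You define $\mu:=-l'$ on $\bigl[\frac12,\tilde\delta\bigr[$ (and $\lambda:=-l'$ on $\bigl]0,\frac12\bigr]$; these signs are wrong --- it should be $\lambda=+l'$ --- but that part is cosmetic). The correct identification is $\mu(\delta)=l'(\delta)-\frac{N^{-1}(\delta)}{n(N^{-1}(\delta))}$: differentiating $l(\delta)=\frac{N^{-1}(\delta)^2}{2}-\int_\delta^{\tilde\delta}\mu(x)\,dx$ gives $l'=\frac{N^{-1}}{n(N^{-1})}+\mu$. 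With your $\mu=-l'$ the density is negative and \cref{eqSigmaT1} is not reproduced. More importantly, you assert that on $\bigl]\frac12,\tilde\delta\bigr[$ there is ``no $\alpha$-type constraint beyond positivity of $l'$''; this is false. Positivity of $\mu$ is equivalent to $l'(\delta)>\frac{N^{-1}(\delta)}{n(N^{-1}(\delta))}$ a.s., which on $\bigl]\frac12,\tilde\delta\bigr[$ is strictly stronger than $l'>0$ since the right-hand side is positive there. This inequality, together with its reverse for $\delta>\tilde\delta$, is precisely the translation of ``$m(\delta)=N^{-1}(\delta)-\sigma(\delta)\sqrt T$ increasing'' (i.e.\ $d_2$ strictly decreasing), obtained by differentiating $m(\delta)=\mp\sqrt{N^{-1}(\delta)^2-2l(\delta)}$; it is the central step of the paper's proof (\cref{eqMMonotone}) and is exactly what makes $\mu>0$ and $\alpha>0$ hold pointwise.

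Relatedly, your account of $\alpha\in\,]0,1[$ is muddled even though your formula for $\alpha$ is right: $\alpha>0$ does not follow from the integrated bound $l(\delta)\le\frac{N^{-1}(\delta)^2}{2}$ (that only controls $\int\eta$, not the sign of the integrand) but from the pointwise inequality $l'<\frac{N^{-1}}{n(N^{-1})}$ a.s.\ coming from $m$ increasing, while $\alpha<1$ is exactly $l'>0$. Positivity of $\sigma_-$ plays no role in pinning down $\alpha$: it is automatic for $\delta>\tilde\delta$ because $l(\tilde\delta)=\frac{N^{-1}(\tilde\delta)^2}{2}>0$ and $l$ is increasing, so $l>0$ on that range. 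Your converse direction is essentially the paper's, but the forward direction as written would not produce the stated parametrization until the role of the $d_2$ monotonicity is built into the definitions of $\mu$ and $\alpha$.
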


	\begin{proof}

Let $\sigma(\delta)\in \Sigma_\text{WA}$ and let $\sigma(\delta)$ be a.s. differentiable. Then for \cref{lemmaT1SigmaSol}, $\sigma(\delta)$ has the form in \cref{eqSigmaT1}. The function $l(\delta)$ is increasing and surjective and it satisfies $l\bigl(\frac12\bigr)<0$, $l(\delta)\leq\frac{N^{-1}(\delta)^2}{2}$ for all $\delta>\frac12$, and there exists a unique $\tilde\delta$ such that $l(\tilde\delta)=\frac{N^{-1}(\tilde\delta)^2}{2}$. Since $d_2(k)$ is decreasing and surjective, the function $m(\delta)=N^{-1}(\delta)-\sigma(\delta)\sqrt T$ is increasing and surjective. Substituting with the expression for $\sigma(\delta)$, $m(\delta)$ can be re-written as $\mp\sqrt{N^{-1}(\delta)^2-2l(\delta)}$, where the sign is negative for $\delta\leq\tilde\delta$ and positive otherwise. Its derivative, when it is defined, is $\mp\frac{\frac{N^{-1}(\delta)}{n(N^{-1}(\delta))}-l'(\delta)}{\sqrt{N^{-1}(\delta)^2-2l(\delta)}}$, and it is positive. Equivalently, the derivative of $l$ satisfies
\begin{equation}\label{eqMMonotone}
\begin{aligned}
&l'(\delta)>\frac{N^{-1}(\delta)}{n(N^{-1}(\delta))} \, \text{a.s.} & \text{if $\delta<\tilde\delta$},\\
&l'(\delta)<\frac{N^{-1}(\delta)}{n(N^{-1}(\delta))} \, \text{a.s.} & \text{if $\delta>\tilde\delta$}
\end{aligned}
\end{equation}
The first inequality is weaker than $l'(\delta)>0$ a.s. if $\delta<\frac12$. Consider then $\delta\in\bigl[\frac12,\tilde\delta\bigr[$. The first inequality implies that there is a positive and a.s. continuous function $\mu(\delta)$ on $\bigl[\frac12,\tilde\delta\bigr[$, such that $l'(\delta)=\frac{N^{-1}(\delta)}{n(N^{-1}(\delta))} + \mu(\delta)$. Taking the integral from $\delta$ to $\tilde\delta$ results into $l(\delta)=\frac{N^{-1}(\delta)^2}{2} - \int_\delta^{\tilde\delta}\mu(x)\,dx$. If $\delta\leq\frac12$, the fact that $l'(\delta)$ is positive can be written as $l'(\delta)=\lambda(\delta)$ where $\lambda(\delta)$ is an a.s. continuous positive function defined on $\bigl]0,\frac12\bigr]$. Taking the integral between $\delta$ and $\frac12$ implies $l(\delta)=l\bigl(\frac12\bigr)-\int_\delta^{\frac12}\lambda(x)\,dx$. Substituting with the value of $l\bigl(\frac12\bigr)$ in the expression with $\mu$, it holds $l(\delta)= -\int_\delta^{\frac12}\lambda(x)\,dx - \int_{\frac12}^{\tilde\delta}\mu(x)\,dx$. Since $l(\delta)$ is surjective and $l\bigl(\frac12\bigr)$ is finite, then $\int_0^{\frac12}\lambda(x)\,dx=\infty$. Similarly, for $\delta>\tilde\delta$, the property of the derivative of $l(\delta)$ implies $l(\delta)=\frac{N^{-1}(\delta)^2}{2} - \int^\delta_{\tilde\delta}\eta(x)\,dx$ for an a.s. continuous positive function $\eta(\delta)$ defined on $]\tilde\delta,1[$. Also, since $l'(\delta)>0$, $\eta(\delta)$ is smaller than $\frac{N^{-1}(\delta)}{n(N^{-1}(\delta))}$, so $\eta(\delta)=\alpha(\delta)\frac{N^{-1}(\delta)}{n(N^{-1}(\delta))}$ where $\alpha(\delta)$ is a function strictly bounded between $0$ and $1$ and a.s. continuous on $]\tilde\delta,1[$. Furthermore, $l(1)=\infty$, then $\lim_{\delta\to1^-}\bigl(\frac{N^{-1}(\delta)^2}2 - \int_{\tilde\delta}^{\delta}\frac{N^{-1}(x)}{n(N^{-1}(x))}\alpha(x)\,dx\bigr) = \infty$. In order to have $m(1)=\infty$, it must hold $\int_{\tilde\delta}^{1}\frac{N^{-1}(x)}{n(N^{-1}(x))}\alpha(x)\,dx=\infty$.

On the other hand, if a function $\sigma(\delta)$ has the form in \cref{eqSigmaT1}, the function $l(\delta)$ is
\begin{equation*}
l(\delta)=
\begin{cases}
-\int_\delta^{\frac12}\lambda(x)\,dx - \int_{\frac12}^{\tilde\delta}\mu(x)\,dx & \text{if $\delta\leq\frac12$},\\
\frac{N^{-1}(\delta)^2}{2}-\int_\delta^{\tilde\delta}\mu(x)\,dx & \text{if $\frac12<\delta\leq\tilde\delta$},\\
\frac{N^{-1}(\delta)^2}{2}-\int_{\tilde\delta}^\delta\frac{N^{-1}(x)}{n(N^{-1}(x))}\alpha(x)\,dx & \text{if $\delta>\tilde\delta$}.\\
\end{cases}
\end{equation*}
Given the hypothesis on the parameters, it is easy to show that $l(\delta)$ is a.s. differentiable and $l'(\delta)>0$ for every $\delta$ where the derivative is defined. Also, $l(0)=-\int_0^{\frac12}\lambda(x)\,dx =-\infty$ and $l(1)= \infty$. So far, we have proven $\sigma(\delta)\in \Sigma_{\delta\to k}$. The only requirement left is $m(\delta)$ increasing and surjective. The monotonicity holds since inequalities in \cref{eqMMonotone} are verified. For the surjectivity, $m(0)=-\sqrt{N^{-1}(0)^2 + 2\bigl(\int_0^{\frac12}\lambda(x)\,dx + \int_{\frac12}^{\tilde\delta}\mu(x)\,dx\bigr)}$ which is $-\infty$, while $m(1)$ is equal to $\sqrt{2\int_{\tilde\delta}^{1}\frac{N^{-1}(x)}{n(N^{-1}(x))}\alpha(x)\,dx}$ which diverges.

\end{proof}

	\hypertarget{study-of-the-parameters}{%
\subsubsection{Study of the parameters}\label{study-of-the-parameters}}

	\hypertarget{relations-with-l-and-m}{%
\paragraph{\texorpdfstring{Relations with \(l\) and
\(m\)}{Relations with l and m}}\label{relations-with-l-and-m}}

	In this paragraph we have a look at parameters \(\tilde\delta\),
\(\lambda\), \(\mu\) and \(\alpha\) and study their relation with the
two functions \(l(\delta)\) and \(m(\delta)\).

Given a function \(\delta\to\sigma(\delta)\) in \(\Sigma_\text{WA}\),
the point \(\tilde\delta\) is the only solution (which will be
automatically greater than \(\frac12\)) to
\(l(\delta) = \frac{N^{-1}(\delta)^2}2\). Equivalently, the point
\(\tilde\delta\) is the only solution to \(m(\delta)=0\).

The function \(\mu\) can be recovered from
\[\sigma(\delta)\sqrt{T} = N^{-1}(\delta) + \sqrt{2\int_\delta^{\tilde\delta}\mu(x)\,dx}\]
for \(\delta\in\bigl[\frac12,\tilde\delta\bigr[\). In particular,
\(\int_\delta^{\tilde\delta}\mu(x)\,dx = \frac{(\sigma(\delta)\sqrt{T}-N^{-1}(\delta))^2}2 = \frac{m(\delta)^2}{2}\),
and deriving one finds \[\mu(\delta)=-m(\delta)m'(\delta).\]

In the proof of \cref{theoParamT1}, we showed
\(\lambda(\delta) = l'(\delta)\) for \(\delta\leq\frac12\).

Finally, consider \(\delta>\tilde\delta\). Then
\[\sigma(\delta)\sqrt{T} = N^{-1}(\delta) - \sqrt{2\int_{\tilde\delta}^\delta\frac{N^{-1}(x)}{n(N^{-1}(x))}\alpha(x)\,dx}\]
and similarly as before
\[\alpha(\delta) = \frac{n(N^{-1}(\delta))}{N^{-1}(\delta)}m(\delta)m'(\delta).\]

	\hypertarget{requirements-on-parameters}{%
\paragraph{Requirements on
parameters}\label{requirements-on-parameters}}

	In \cref{theoParamT1}, the positivity of parameter \(\lambda\) and the
requirement \(\alpha(\delta)<1\) are directly linked to the fact that
the function \(l(\delta)\) must be increasing. The positivity of \(\mu\)
and \(\alpha\) is instead connected with the monotonicity of the
function \(m(\delta)\).

The requirement \(\int_0^{\frac12}\lambda(x)\,dx = \infty\) comes from
the fact that \(l(0)=-\infty\) and it also implies \(m(0)=-\infty\). The
requirement
\(\int_{\tilde\delta}^{1}\frac{N^{-1}(x)}{n(N^{-1}(x))}\alpha(x)\,dx = \infty\)
originates from \(m(1)=\infty\).

	The last and most awkward requirement arises to satisfy \(l(1)=\infty\).
Indeed, in order to have the sufficient and necessary condition, the
requirement \begin{equation}\label{eqAlphaLim}
\lim_{\delta\to1^-}\Bigl(\frac{N^{-1}(\delta)^2}2 - \int_{\tilde\delta}^{\delta}\frac{N^{-1}(x)}{n(N^{-1}(x))}\alpha(x)\,dx\Bigr) = \infty
\end{equation} has been used in \cref{theoParamT1}. A possible easier
condition could be \[\lim_{\delta\to1-}\alpha(\delta)\leq C<1\] but this
is sufficient and not necessary. Indeed, choosing
\(\alpha(\delta) = 1-\frac{c}{N^{-1}(\delta)}\) for a positive constant
\(c\leq N^{-1}(\tilde\delta)\), would still satisfy conditions of
\cref{theoParamT1} even though \(\alpha(\delta)\) has right limit equal
to \(1\). Firstly, observe that \begin{align*}
\int_{\tilde\delta}^{\delta}\frac{N^{-1}(x)}{n(N^{-1}(x))}\alpha(x)\,dx &= \int_{\tilde\delta}^{\delta}\frac{N^{-1}(x)}{n(N^{-1}(x))}\Bigl(1-\frac{c}{N^{-1}(\delta)}\Bigr)\,dx\\
&= \frac{N^{-1}(\delta)}2\bigl(N^{-1}(\delta)-2c\bigr) + d
\end{align*} where
\(d=-\frac{N^{-1}(\tilde\delta)}2\bigl(N^{-1}(\tilde\delta)-2c\bigr)\)
is a finite constant, and the integral diverges for \(\delta\) going to
\(1\). Also, the argument of the limit in \cref{eqAlphaLim} becomes
\[\frac{N^{-1}(\delta)^2}2 - \frac{N^{-1}(\delta)}2\bigl(N^{-1}(\delta)-2c\bigr) - d = cN^{-1}(\delta) - d\]
which diverges at \(1\).

	\hypertarget{practical-calibration-of-a-smile-in-delta}{%
\subsubsection{Practical calibration of a smile in
delta}\label{practical-calibration-of-a-smile-in-delta}}

	We now reconsider the calibration of a smile in delta started in
\cref{calibration-of-a-smile-in-delta-in-sigma_deltato-k}. The aim here
is to calibrate a delta smile which can be transformed into a smile in
strike and vice-versa and such that its transformation eventually
satisfies the two Fukasawa necessary conditions of no arbitrage defining
the set \(\Sigma_{\text{WA}}\) as in \cref{eqSigmaWA}.

	There are two methodologies that can be designed. The first one does not
guarantee that the calibrated smile lives in \(\Sigma_{\text{WA}}\), but
it guarantees that it lives in \(\Sigma_{\delta\to k}\). This means that
the smile in delta can be transformed into a smile in strike and the
latter satisfies the first Fukasawa necessary condition of no arbitrage
under hypothesis in \cref{remarkHypothesis} (i.e.~the function
\(d_1(k)\) is decreasing and surjective).

This methodology follows the steps:

\begin{enumerate}
\def\labelenumi{\arabic{enumi}.}
\tightlist
\item
  consider the market discrete pillars \(\{k_i,\sigma_i\}_i\);
\item
  convert them to the pillars \(\{\delta_i,\sigma_i\}_i\) by defining
  \(\delta_i= N(d_1(k_i,\sigma_i))\);
\item
  compute the pillars \(\{\delta_i,l_i\}\) with \(l_i=-k_i\);
\item
  given the pillars in point 3., interpolate/extrapolate a function
  \(\delta\to l(\delta)\) such that

  \begin{itemize}
  \tightlist
  \item
    \(l(0)=-\infty, l(1)=+\infty\),
  \item
    \(l\) strictly increasing,
  \item
    \(l\bigl(\frac12\bigr)<0\),
  \item
    \(l(\delta)\leq\frac{N^{-1}(\delta)^2}{2} \, \forall\delta>\frac12\),
  \item
    \(\exists! \tilde\delta |\, l(\tilde\delta)=\frac{N^{-1}(\tilde\delta)^2}{2}\).
  \end{itemize}
\end{enumerate}

This would guarantee that the smile \(\delta\to\sigma(\delta)\) defined
as \begin{equation*}
\sigma(\delta)\sqrt{T} =
\begin{cases}
N^{-1}(\delta) + \sqrt{N^{-1}(\delta)^2-2l(\delta)} & \text{if $\delta\leq\tilde\delta$},\\
N^{-1}(\delta) - \sqrt{N^{-1}(\delta)^2-2l(\delta)} & \text{if $\delta>\tilde\delta$}.\\
\end{cases}
\end{equation*} lives in \(\Sigma_{\delta\to k}\). In order to have that
the smile lives in \(\Sigma_\text{WA}\), i.e.~that the corresponding
smile in strike satisfies the two Fukasawa necessary conditions of no
arbitrage, we should add in step 4. the requirements:

\begin{itemize}
\tightlist
\item
  \(l'(\delta)>\frac{N^{-1}(\delta)}{n(N^{-1}(\delta))}\) a.s. for
  \(\delta\in\bigl]\frac12,\tilde\delta\bigr[\),
\item
  \(l'(\delta)<\frac{N^{-1}(\delta)}{n(N^{-1}(\delta))}\) a.s. for
  \(\delta>\tilde\delta\),
\end{itemize}

so that the function \(m(\delta)\) is increasing and surjective.

	Interpolating a function \(l\) which satisfies all the above
requirements is not immediate. For this reason, a second more cunning
calibration methodology can be implemented, using \cref{theoParamT1}.
The target of such calibration routine are the functions \(\lambda\),
\(\mu\) and \(\alpha\). These functions must satisfy the requirements in
\cref{theoParamT1} in order to guarantee that the smile \cref{eqSigmaT1}
is a.s. differentiable and can be transformed into a smile in
log-forward moneyness satisfying the conditions of bijectivity of the
functions \(d_1(k,\hat\sigma(k))\) and \(d_2(k,\hat\sigma(k))\).

The steps to be performed become:

\begin{enumerate}
\def\labelenumi{\arabic{enumi}.}
\tightlist
\item
  consider the market discrete pillars \(\{k_i,\sigma_i\}_i\);
\item
  convert them to the pillars \(\{\delta_i,\sigma_i\}_i\) by defining
  \(\delta_i= N(d_1(k_i,\sigma_i))\);
\item
  given the pillars in point 2., interpolate/extrapolate a function
  \(\delta\to \sigma(\delta)\) defined as in \cref{eqSigmaT1} such that

  \begin{itemize}
  \tightlist
  \item
    \(\tilde\delta\in\bigl]\frac12,1\bigr[\);
  \item
    \(\lambda\) is a positive function defined on
    \(\bigl]0,\frac12\bigr]\) such that
    \(\int_0^{\frac12}\lambda(x)\,dx = \infty\);
  \item
    \(\mu\) is a positive function defined on
    \(\bigl[\frac12,\tilde\delta\bigr[\);
  \item
    \(\alpha\) is a function defined on \(]\tilde\delta,1[\) such that
    \(\alpha(\delta)\in]0,1[\),
    \(\int_{\tilde\delta}^{1}\frac{N^{-1}(x)}{n(N^{-1}(x))}\alpha(x)\,dx = \infty\),
    and
    \(\lim_{\delta\to1^-}\bigl(\frac{N^{-1}(\delta)^2}2 - \int_{\tilde\delta}^{\delta}\frac{N^{-1}(x)}{n(N^{-1}(x))}\alpha(x)\,dx\bigr) = \infty\).
  \end{itemize}
\end{enumerate}

	\hypertarget{examples-of-smiles-in-sigma_textwa}{%
\subsubsection{\texorpdfstring{Examples of smiles in
\(\Sigma_\text{WA}\)}{Examples of smiles in \textbackslash Sigma\_\textbackslash text\{WA\}}}\label{examples-of-smiles-in-sigma_textwa}}

	\hypertarget{the-bounded-and-flat-smiles}{%
\paragraph{The bounded and flat
smiles}\label{the-bounded-and-flat-smiles}}

	Let us look at the requirement on \(\alpha\) detailed in
\cref{requirements-on-parameters}. It has been shown that
\(\alpha(\delta) = 1-\frac{c_\alpha}{N^{-1}(\delta)}\) with
\(c_\alpha\leq N^{-1}(\delta)\) satisfies conditions of
\cref{theoParamT1}. An interesting consequence of this example is that
the limit of \(\sigma(\delta)\) in \(1\) is finite. Indeed, for
\(\delta>\tilde\delta\), it holds
\[\sigma(\delta)\sqrt{T} = N^{-1}(\delta) - \sqrt{N^{-1}(\delta)\bigl(N^{-1}(\delta)-2c_\alpha\bigr)+2d}\]
and this in turn coincides with
\[\frac{2c_\alpha N^{-1}(\delta) -2d}{N^{-1}(\delta) + \sqrt{N^{-1}(\delta)\bigl(N^{-1}(\delta)-2c_\alpha\bigr)+2d}}\]
which converges to \(c_\alpha\) as \(\delta\) goes to \(1\). This means
that it is possible to obtain bounded smiles on the right appropriately
choosing the function \(\alpha(\delta)\) in the parametrization
\cref{eqSigmaT1}.

Similarly, it is possible to have bounded wings on the left choosing a
suitable \(\lambda(\delta)\) function. For example, we can define
\(\lambda(\delta) = \frac{c_\lambda}{n(N^{-1}(\delta))}\) to have the
convergence of the smile to \(c_\lambda\) on the left.

\Cref{figureBoundedSmile} shows a classical skew smile, which notably
has a bounded left wing of the smile in delta. The function \(\lambda\)
is defined as above with \(c_\lambda = 0.1\). In order to guarantee
continuity of the derivative and a nice shape of the smile, other
parameters have been chosen as \begin{align}\label{eqBoundedSmile}
&
\mu(\delta)=\frac{c_\lambda}{n(0)}\frac{\tilde\delta-\delta}{\tilde\delta-\frac12}
& &
\alpha(\delta) = \begin{cases}
\frac{n(N^{-1}(\delta))}{N^{-1}(\delta)}\mu(2\tilde\delta-\delta) & \text{if $\delta<\hat\delta$}\\
\frac{n(N^{-1}(\hat\delta))}{N^{-1}(\hat\delta)}\mu\bigl(\frac12\bigr)  & \text{if $\delta\geq\hat\delta$}
\end{cases}
\end{align} where \(\hat\delta=2\tilde\delta-\frac12\) and
\(\tilde\delta=0.7\).

\begin{figure}
	\centering
	\begin{subfigure}{.5\textwidth}
		\centering
		\includegraphics[width=.8\linewidth]{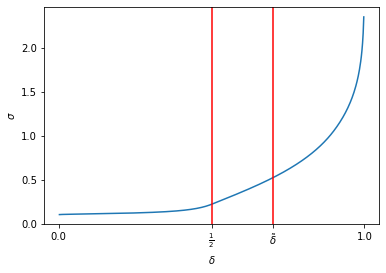}
	\end{subfigure}%
	\begin{subfigure}{.5\textwidth}
		\centering
		\includegraphics[width=.8\linewidth]{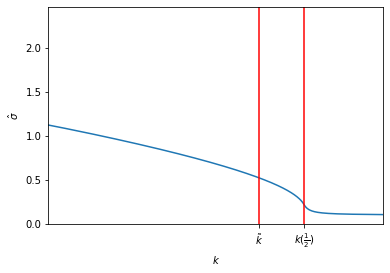}
	\end{subfigure}
	\caption{Skew shaped smile with bounded left wing in delta (left) and bounded right wing in log-forward moneyness (right) obtained with parameters as in \cref{eqBoundedSmile}}
	\label{figureBoundedSmile}
\end{figure}
{ \hspace*{\fill} \\}
    
	This example can be further pushed to obtain a flat smile. Indeed, if we
want a flat total implied volatility \(\sigma(\delta)\sqrt{T}\) at a
level \(c\), we can define \begin{align*}
&
\lambda(\delta) = \frac{c}{n(N^{-1}(\delta))}
& &
\mu(\delta)=\frac{c-N^{-1}(\delta)}{n(N^{-1}(\delta))}
& &
\alpha(\delta) = 1-\frac{c}{N^{-1}(\delta)}
\end{align*} and \(\tilde\delta = N(c)\).

	\begin{remark}

Smiles of the form \cref{eqSigmaT1} allow for bounded wings and for flat shapes.

\end{remark}

	\hypertarget{the-w-shaped-smile}{%
\paragraph{The W-shaped smile}\label{the-w-shaped-smile}}

	The parametrization \cref{eqSigmaT1} can be used to model very different
kind of smiles, and also odd ones. For example, we can model `sad
smiles' defining \(\tilde\delta=0.7\), \begin{align}\label{eqSadSmile}
&
\lambda(\delta) = 
\begin{cases}
\frac{\hat\delta^2c}{x^2} & \text{if $\delta<\hat\delta$}\\
c & \text{if $\delta\geq\hat\delta$}
\end{cases}
& &
\mu(\delta)=c-\frac{N^{-1}(\delta)}{n(N^{-1}(\delta))}
& &
\alpha(\delta) = 
\begin{cases}
1-c\frac{n(N^{-1}(\delta))}{N^{-1}(\delta)} & \text{if $\delta<\hat{\hat\delta}$}\\
1-c\frac{n(N^{-1}(\hat{\hat\delta}))}{N^{-1}(\hat{\hat\delta})}  & \text{if $\delta\geq\hat{\hat\delta}$}
\end{cases}
\end{align} where
\(c=\frac{N^{-1}(\tilde\delta)}{n(N^{-1}(\tilde\delta))}\),
\(\hat\delta=0.02\) and \(\hat{\hat\delta}=0.9\).

With these parameters, all conditions of \cref{theoParamT1} are
satisfied and the resulting smile in delta \(\sigma(\delta)\) has a
W-shape as in \Cref{figureSadSmile}.

\begin{figure}
	\centering
	\begin{subfigure}{.5\textwidth}
		\centering
		\includegraphics[width=.8\linewidth]{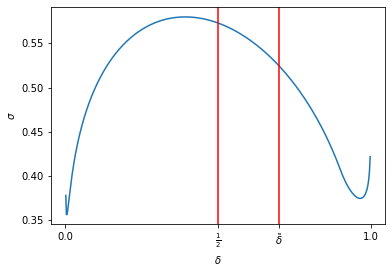}
	\end{subfigure}%
	\begin{subfigure}{.5\textwidth}
		\centering
		\includegraphics[width=.8\linewidth]{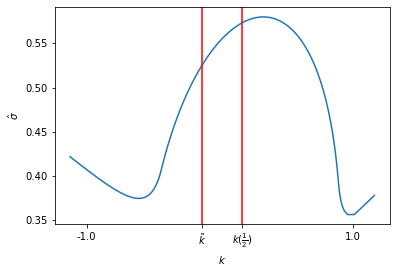}
	\end{subfigure}
	\caption{W-shaped smile in delta (left) and in log-forward moneyness (right) obtained with parameters as in \cref{eqSadSmile}}
	\label{figureSadSmile}
\end{figure}
{ \hspace*{\fill} \\}
    
	It is easy to show that the left and right limits of the smile (both in
delta and in strike) are infinite. Choosing different values for
\(\hat\delta\) and \(\hat{\hat\delta}\) allows to move the location of
the two minima and the maxima of the smile. In this way, it is possible
to obtain smiles with W-shapes that have been described in the
log-normal mixture framework by Glasserman and Pirjol
\cite{glasserman2021w} and have been seen, for example, for AMZN on the
\(26\) of April \(2018\) for options with expiry \(27\) April \(2018\),
before to the first quarter earnings announcement.

	\hypertarget{svi}{%
\paragraph{SVI}\label{svi}}

	The SVI model has been introduced by Gatheral at the Global Derivatives
conference in Madrid in 2004 \cite{gatheral2004parsimonious}. It is a
model for the implied total variance \(\hat\omega(k)=\hat\sigma(k)^2T\)
as a function of the log-forward moneyness \(k\) and it is defined as
\[\hat\omega(k) = a + b\bigl(\rho(k-m) + \sqrt{(k-m)^2+\bar\sigma^2}\bigr).\]

We suppose that the SVI parameters nder study satisfy the conditions of
a decreasing and surjective \(d_1(k)\) function and of a decreasing and
surjective \(d_2(k)\) function. In such way, the corresponding smile in
delta obtained through the definition
\(\sigma(\delta)=\hat\sigma(k(\delta))\) where
\(k(\delta)=N^{-1}(d_1^{-1}(\delta))\), belongs to \(\Sigma_\text{WA}\).
The SVI in the delta parameterization is not recovered in the present
article, however we can still do some tool computations on such smile.

For \cref{lemmaT1SigmaSol}, there exists a unique \(\tilde\delta\) such
that \(l(\tilde\delta)=\frac{N^{-1}(\tilde\delta)^2}{2}\). In the strike
notation, this is equivalent to say that there exists a unique
\(\tilde k\) such that \(-\tilde k = \frac{d_1(\tilde k)^2}{2}\), or
simplifying \(\sigma(\tilde k)^2T = -2\tilde k\). We now calculate such
\(\tilde k\).

	We need to look at the solutions of
\[a + b(\rho(k-m) + \sqrt{(k-m)^2+\bar\sigma^2}) = -2k,\] or
equivalently \[-(2+b\rho)k+b\rho m-a=b\sqrt{(k-m)^2+\bar\sigma^2}.\]
Under the Lee moment formula for \(\delta\gg\frac{1}{2}\) (or
\(k\ll0\)), it holds \(b(1-\rho)<2\), so \(2+b\rho>2+b\rho-b>0\). Then,
the above condition is never satisfied if
\(k\geq\frac{b\rho m-a}{2+b\rho}:=E\). Otherwise, we can take the square
and simplifying, one recovers a second-degree equation of the form
\(Ak^2+Bk+C=0\) where \begin{align*}
A &:= (2+b\rho-b)(2+b\rho+b)\\
B &:= 2\bigl(a(2+b\rho)-m\bigl(b^2(1-\rho^2)-2b\rho\bigr)\bigr)\\
C &:= (b\rho m-a)^2-b^2(m^2+\bar\sigma^2)=0.
\end{align*}

	The leading coefficient \(A\) is positive, and the Delta of such
equation is
\(\Delta = 4b^2((a+2m)^2+\bar\sigma^2(2+b\rho-b)(2+b\rho+b))\), which is
also positive since both terms are positive. Let us call \(k_+\) and
\(k_-\) the two possible solutions, with \(k_-<k_+\). They are
acceptable iff they are smaller than \(E\), or iff
\(\pm\sqrt{\Delta}<2AE+B\) respectively. The RHS is
\(\frac{2b^2(a+2m)}{2+b\rho}\), which is positive iff \(a>-2m\). In such
case, the \(+\) solution is acceptable iff
\[0<(2AE+B)^2-\Delta=2A(2AE^2-EB+2C)=-\frac{4b^2A}{(2+b\rho)^2}\bigl((a+2m)^2+\bar\sigma^2(2+b\rho)^2\bigr)\]
which is not possible. On the other hand, the \(-\) solution is
acceptable iff \(a>-2m\) or \(\Delta-(2AE+B)^2>0\), which is always
verified as proved above.

	In particular,
\[\tilde k = \frac{bm(2\rho-b(1-\rho^2))-a(2+b\rho) - b\sqrt{(a+2m)^2+\bar\sigma^2(2+b(1+\rho))(2-b(1-\rho))}}{(2+b(1+\rho))(2-b(1-\rho))}.\]

	The SVI model has given birth to other sub-models, obtained reducing the
original \(5\) parameters model to a model with less parameters. Among
them, the SSVI model by Gatheral and Jacquier
\cite{gatheral2014arbitrage} has been largely used in industry. It has
the form \begin{equation*}
\hat\omega(k) = \frac{\theta}{2}\bigl(1+\rho\varphi k + \sqrt{(\varphi k + \rho)^2 + (1-\rho^2)}\bigr).
\end{equation*} where the parameters are defined from the SVI ones as
\begin{align*}
\varphi = \frac{\sqrt{1-\rho^2}}{\sigma}, \quad \theta=\frac{2b\sigma}{\sqrt{1-\rho^2}}.
\end{align*}

In the case of SSVI, the expression for \(\tilde k\) is easier. Indeed
\[\tilde k = -\frac{2\theta}R\] where
\[R = \Bigl(2+\frac{\theta\varphi}2(1+\rho)\Bigr)\Bigl(2-\frac{\theta\varphi}2(1-\rho)\Bigr)\]
is positive for the Lee bounds, which require
\(\frac{\theta\varphi}2(1+|\rho|)<2\). The corresponding delta is
\[\tilde\delta = N\Bigl(4\sqrt{\frac{\theta}R}\Bigr).\]

	\hypertarget{conclusion}{%
\section{Conclusion}\label{conclusion}}

	The ability to pass from a smile in delta to a smile in strike and
vice-versa has been characterized requiring that the \(d_1\) function of
the Black-Scholes formula has to be decreasing and surjective. This
condition is one of the two necessary requirements for the absence of
butterfly arbitrage obtained by Fukasawa. Adding the second requirement
that the \(d_2\) function is decreasing ensures that also the symmetric
smile has the property of being transformed into the delta space.

The requirements that the \(d_1\) and \(d_2\) functions have to be
decreasing under no butterfly arbitrage can be translated into the delta
space with specific conditions. These conditions identify a
characterization of the set of smiles in delta satisfying the weak no
butterfly arbitrage requirements and allow to parametrize such set. As a
consequence, we have obtained a parameterization depending on one real
number and three positive functions which guarantees that the resulting
smiles in delta satisfy the weak no arbitrage conditions identified by
Fukasawa.

Practitioners who use smiles in delta could use those parameterizations
to ensure at least weak no butterfly arbitrage. An open challenging task
is to characterize the subfamily of no butterfly arbitrage smiles in
delta.

The task of characterizing the set of butterfly arbitrage-free smiles is
open in both delta and strike/log-forward moneyness spaces. The results
in the present article give hope of achieving the characterization of
such set using the delta parametrization.


\newpage \bibliography{Biblio}
\bibliographystyle{plain}

\end{document}